\documentclass[a4paper]{article}
\usepackage[utf8]{inputenc}
\usepackage[english]{babel}
\usepackage{bm}
\usepackage[a4paper,top=3cm,bottom=2cm,left=2cm,right=2cm,marginparwidth=1.75cm]{geometry}
\usepackage{multirow}
\usepackage{authblk}
\usepackage{url}
\usepackage{color}
\usepackage{amsmath}
\usepackage{amsthm}
\usepackage[round,comma]{natbib}
\usepackage{graphicx}
\usepackage{tikz}
\usepackage[boxed]{algorithm2e}
\usepackage{subcaption}
\usepackage{setspace}
\usepackage{caption}
\doublespacing

\makeatletter
\newtheorem*{rep@theorem}{\rep@title}
\newcommand{\newreptheorem}[2]{%
\newenvironment{rep#1}[1]{%
 \def\rep@title{#2 \ref{##1}}%
 \begin{rep@theorem}}%
 {\end{rep@theorem}}}
\makeatother

\newreptheorem{lemma}{Lemma}
\newtheorem{proposition}{Proposition}
\newreptheorem{proposition}{Proposition}

\bibliographystyle{apalike}

\begin{document}

\title{ \vspace{-3em}  Detecting Heterogeneous Treatment Effect with Instrumental Variables}
\author[1]{Michael Johnson}
\author[1]{Jiongyi Cao}
\author[1]{Hyunseung Kang}
\affil[1]{Department of Statistics, University of Wisconsin-Madison}
%\affil[2]{Department of Statistics, University of Chicago}
\date{}

\setcounter{Maxaffil}{0}
\renewcommand\Affilfont{\itshape\small}

\maketitle

\begin{abstract}
\noindent There is an increasing interest in estimating heterogeneity in causal effects in randomized and observational studies. However, little research has been conducted to understand effect heterogeneity in an instrumental variables study. In this work, we present a method to estimate heterogeneous causal effects using an instrumental variable with matching. The method has two parts. The first part uses subject-matter knowledge and interpretable machine learning techniques, such as classification and regression trees, to discover potential effect modifiers. The second part uses closed testing to test for statistical significance of each effect modifier while strongly controlling the familywise error rate. We apply this method on the Oregon Health Insurance Experiment, estimating the effect of Medicaid on the number of days an individual's health does not impede their usual activities by using a randomized lottery as an instrument. Our method revealed Medicaid's effect was most impactful among older, English-speaking, non-Asian males and younger, English-speaking individuals with at most high school diplomas or General Educational Developments.

\noindent {\bf Keywords:} Complier Average Causal Effect, Heterogeneous Treatment, Instrumental Variables, Matching, Machine Learning, Oregon Health Insurance Experiment
\end{abstract}

\section{Introduction}
\subsection{Motivation: Utilization of Medicaid in Oregon and the Complier Average Treatment Effect}

In January of 2008, Oregon reopened its Medicaid-based health insurance plan for its eligible residents and, for a brief period, allowed a limited number of individuals  to enroll in the program. Specifically, a household in Oregon was randomly selected by a lottery system run by the state and any eligible individual in the household can choose to enroll in the new health insurance plan; households that weren't selected by the lottery could not enroll whatsoever. 

For policymakers, Oregon's randomized lottery system was a unique opportunity, specifically a natural experiment, to study Medicaid's causal effect on a variety of health and economic outcomes, as directly randomizing Medicaid (or withholding it) to individuals would be infeasible and unethical. In this natural experiment, commonly referred to as the Oregon Health Insurance Experiment (OHIE), \citet{finkelstein2012oregon} used the randomized lottery as an instrumental variable (see Section \ref{ivcate} for details) to study the complier average causal effect (CACE), or the effect of Medicaid among individuals who enrolled in Medicaid after winning the lottery \citep*{angrist1996identification}. The CACE reflects Medicaid's impact among a subgroup of individuals and differs from the average treatment effect for the entire population (ATE) or the intent-to-treat (ITT) effect of the lottery itself on the outcome. In this paper, we focus on studying the CACE; see \citet{imbens2010better}, \citet{swanson2013commentary}, and \citet{swanson2014think} for additional discussions on the CACE.

Often in studying the CACE, the population of compliers is assumed to be homogeneous whereby two compliers are alike and have the same treatment effect. But, no two individuals are the same and it is plausible that some compliers may benefit more from the treatment than other compliers. For example, sick individuals who enroll in Medicaid after winning the lottery may benefit more from Medicaid than healthy individuals. Also, the perceived benefit of enrolling in Medicaid among sick versus healthy individuals may create heterogeneity in the compliance rate, i.e. the number of people who sign up when they win the lottery, with sick people presumably signing up more than healthy people. Alternatively, if people are equally likely to enroll in Medicaid when they win the lottery, those who are unemployed may benefit more from Medicaid in terms of reducing out-of-pocket healthcare spending and medical debt than those who are employed. The theme of this paper is to explore these issues, specifically the heterogeneity of CACE and how to discover them in an honest manner by using well-known matching methods and recent tree-based methods in heterogeneous treatment effect estimation. 

\subsection{Prior Work and Our Contributions}
There are many recent works in causal inference using tree-based methods to estimate heterogeneity in the ATE, with majority of them utilizing sample splitting or sub-sampling to obtain honest inference; see \citet{su2009subgroup}, \citet{hill2011bayesian}, \citet{athey2016recursive}, \citet{hahn2017bayesian}, \citet{wager2018estimation}, \citet{chernozhukov2018generic}, \citet*{athey2019generalized} and references therein. Here, honest inference refers to a procedure that controls the Type I error rate (or the familywise error rate) of testing a null hypothesis about a treatment effect at a desired level $\alpha$, especially if the hypothesis was suggested by the data; see Section \ref{dischcate} for additional discussions. \citet*{hsu2013effect} used pair matching and classification and regression trees (CART) \citep{breiman1984classification} to conduct honest inference, all without sample splitting. A follow-up work by \citet{hsu2015strong} formally showed that the procedure strongly controls the familywise error rate for testing heterogeneous treatment effects. Subsequent works by \cite*{lee2018air},  \cite{lee2018surgery}, and \cite{lee2018powerful} extended this idea to increase statistical power of detecting such effects. 

There is also work on nonparametrically estimating treatment effects using instrumental variables (IV), mostly using likelihood, series, sieve, minimum distance, and/or moment-based methods \citep{abadie2003semiparametric, blundell2003endogeneity, newey2003instrumental, ai2003efficient, hall2005nonparametric, blundell2007semi, darolles2011nonparametric, chen2012estimation, su2013local,athey2019generalized}. Recently, \cite{stoffi2018estimating} and  \citet{bargagli2019heterogeneous} explored effect heterogeneity in the CACE by using causal trees \citep{athey2015machine} and Bayesian causal forests \citep{hahn2017bayesian}, specifically by estimating heterogeneity in the ITT effect and dividing it by the compliance rate. However, to the best of our knowledge, none of the methods used matching, a popular, intuitive, and easy-to-understand method in causal inference, as a device to nonparametrically estimate treatment heterogeneity in the CACE and to guarantee strong familywise Type I error control. Works on using matching with an instrument by \citet{baiocchi2010building} and \cite{kang2013causal, kang2016full} only focused on the population CACE; they do not explore heterogeneity in the CACE. Also, aforementioned works by \cite{hsu2013effect} and \cite{hsu2015strong} using matching and CART did not consider instruments. 

The goal of this paper is to propose a matching-based method to study effect heterogeneity in instrumental variables settings. Specifically, the target estimand of interest is what we call the \emph{heterogeneous} complier average causal effect (H-CACE). A heterogeneous complier average causal effect (H-CACE) is the usual complier average causal effect, but for a subgroup of individuals defined by their pre-instrument covariates. At a high level, H-CACE explores  treatment heterogeneity in the complier population, where we suspect that not all compliers in the data react to the treatment in the same way. Some subgroup of compliers may respond to the treatment differently than another subgroup of compliers, who may not respond to the treatment at all; some may even be more likely to be compliers if they believe the treatment would benefit them and they may actually benefit from the treatment. The usual CACE obscures the underlying heterogeneity among compliers by averaging across different types of compliers whereas H-CACE attempts to expose it. Also, in the case where the four compliance types in \cite{angrist1996identification}, specifically compliers, never-takers, always-takers, and defiers, have identical effects, the H-CACE can identify the heterogeneous treatment effect for the entire population using an instrument. Section \ref{hcate} formalizes H-CACE and provides additional discussions.

Methodologically, to study H-CACE, we combine existing ideas of heterogeneous treatment effect estimation in non-IV matching contexts by \citet{hsu2015strong} and matching with IVs by \cite{baiocchi2010building} and \cite{kang2016full}. Specifically, we first follow \cite{baiocchi2010building} and \cite{kang2016full} and conduct pair matching on a set of pre-instrument covariates. Second, we follow \cite{hsu2015strong} where we obscure the difference in the outcomes between treated and controls by using absolute differences and use CART to discover novel subgroups of study units without contaminating downstream inference. Specifically, we use closed testing to test the H-CACE in different subgroups while strongly controlling for familywise error rate \citep*{marcus1976closed}. Simulation studies are conducted to evaluate the performance of our proposed method under varying levels of compliance and effect heterogeneity. The simulation study also compares our method to the recent aforementioned method by \cite{bargagli2019heterogeneous}. We then use our method to analyze heterogeneity in the effect of Medicaid on increasing the number of days a complying individual's health does not hamper their usual activities.

\section{Method}
\subsection{Notation}

Let $i=1,\dots, I$ index the $I$ matched pairs and $j=1,2$ index the units within each matched pair $i$. Let $Z_{ij}$ be a binary instrument for unit $j$ in matched pair $i$ where one unit in the pair receives the instrument value $Z_{ij}=1$ and the other receives the value $Z_{ij} = 0$. In the OHIE data, $Z_{ij} = 1$ and $Z_{ij}=0$ denotes an individual winning or losing the Medicaid lottery, respectively. Let $\mathbf{Z}$ be the vector of instruments, $\mathbf{Z} = (Z_{11},Z_{12}, \dots, Z_{I1}, Z_{I2}$) and $\mathcal{Z}$ denote an event of instrument assignments for all units.

For unit $j$ in matched pair $i$, let $d_{1ij}$ and $d_{0ij}$ denote the binary potential treatment/exposure given the instrument value of $Z_{ij}=1$ and $Z_{ij}=0$ respectively. Further, define the potential response $r_{1ij}^{(d_{1ij})}$ for unit $j$ in matched set $i$ with exposure $d_{1ij}$ receiving instrument value $Z_{ij} = 1$; we define $r_{0ij}^{(d_{0ij})}$ similarly but with instrument value $Z_{ij} = 0$. For the OHIE data, $d_{1ij}$ denotes whether an individual enrolled in Medicaid and $r_{1ij}^{(d_{1ij})}$ denotes the potential outcome when the individual wins the lottery $Z_{ij} = 1$. For unit $j$ in matched set $i$, the observed response is defined as $R_{ij} = r_{1ij}^{(d_{1ij})} Z_{ij} + r_{0ij}^{(d_{0ij})}(1-Z_{ij})$ and the observed treatment is defined as $D_{ij} = d_{1ij}Z_{ij} + d_{0ij}(1-Z_{ij})$. The notation assumes that the Stable Unit Treatment Value Assumption (SUTVA) holds \citep{rubin1980randomization}. Define $\mathcal{F} = \{ (r_{1ij}^{(d_{1ij})}, r_{0ij}^{(d_{0ij})}, d_{1ij}, d_{0ij}, \mathbf{X}_{ij}, u_{ij}), i=1,\dots,I, j=1,2\}$ to be the set of potential outcomes, treatments, and covariates, both observed, $\mathbf{X}_{ij}$, and unobserved, $u_{ij}$. 

When partitioning the matched sets into subgroups for discovering effect heterogeneity, the following notation is used. We define a ``set of sets", or grouping $\mathcal{G}$, which contains mutually exclusive and exhaustive subsets of the pairs $s_g \subseteq \{ 1, \dots , I\}$ so that $\mathcal{G} = \{ s_1, \dots, s_G\}$. The subscript $g$ in $s_g$ is used to denote a unit partitioned into the $g$th subset $s_g$. To avoid overloading the notation, $s$ and $s_g$ will be used interchangeably when it isn't necessary to specify a subgroup $g$. The set of potential outcomes, treatments, and covariates for subset $s_g$ are defined as $\mathcal{F}_{s_g} = \{(r_{1sij}^{(d_{1sij})},r_{0sij}^{(d_{0sij)}}, d_{1sij},d_{0sij}, \mathbf{X}_{sij}, u_{sij}): s_g  \subseteq \{ 1, \dots , I\}, i \in s_g, j = 1, 2\}$, where $\mathcal{F} = \bigcup_s \mathcal{F}_s$. For example, consider a grouping of two subgroups, $\mathcal{G} = \{s_1, s_2 \}$, for $I=10$ matched pairs. Suppose the first few pairs and the last pair make up the first subgroup and the rest are in the second subgroup, say $s_1 = \{1, 2, 3, 10\}$ and $s_2 = \{4, 5, 6, 7,8 ,9\}$. The set of potential responses, treatments, and covariates for the first group is then $\mathcal{F}_{s_1} = \{(r_{1s_1ij}^{(d_{1s_1ij})},r_{0s_1ij}^{(d_{0s_1ij)}}, d_{1s_1ij},d_{0s_1ij}, \mathbf{X}_{s_1ij}, u_{s_1ij}): s_1  = \{1, 2, 3, 10\}, i \in s_1, j = 1, 2\}$. The observed response, binary instrument, and exposure for a given unit in subset $s_g$ is  denoted as $Z_{s_gij}$, $R_{s_gij}$, and $D_{s_gij}$ respectively.

\subsection{Review: Matching, Instrumental Variables, and the CACE} \label{ivcate}

Matching is a popular non-parametric technique in observational studies to balance the distribution of the observed covariates between treated and control units by grouping units based on the similarity of their covariates; see \citet{stuart2010matching}, Chapters 3 and 8 of \citet{rosenbaum2010design}, and \cite{rosenbaum2020modern} for overviews of matching. Pair matching is a specific type of matching where each treated unit is only matched to one control unit. In the context of instrumental variables and pair matching, the instrument serves as the treatment/control variable and the matching algorithm creates $I$ matched pairs where the two units in a matched pair are similar in their observed covariates $x_{ij}$, but one receives the instrument value $Z_{ij} = 1$ and the other receives the instrument value $Z_{ij} = 0$. 

Instrumental variables (IV) is a popular approach to analyze causal effects when unmeasured confounding is present and is based on using a variable called an instrument \citep{angrist1996identification,hernan2006instruments,baiocchi2014instrumental}. The instrument must satisfy three core assumptions: (A1) the instrument is related to the exposure or treatment, or $\sum_{i=1}^I \sum_{j=1}^2 (d_{1ij}-d_{0ij}) \neq 0$ (commonly referred to as instrument relevance); (A2) the instrument is not related to the outcome in any way except through the treatment, or $r_{0ij}^{(d)} = r_{1ij}^{(d)} \equiv r_{ij}^{(d)}$ for a fixed $d$ (commonly referred to as the exclusion restriction); and (A3) the instrument is not related to any unmeasured confounders that affect the treatment and the outcome, or $P(Z_{ij} = 1|\mathcal{F}, \mathcal{Z}) = \frac{1}{2}$ within each pair $i$ (commonly referred to as instrument ignorability or exchangeability). If these core assumptions are satisfied, it is possible to obtain bounds on the average treatment effect \citep{balke1997bounds}. To point identify a treatment effect, one needs to make additional assumptions. Here, we assume (A4) monotonicity where the potential treatment is a monotonic function of the instrument values, or $d_{0ij} \leq d_{1ij}$. Assumption (A4) can be interpreted in terms of four sub-populations: compliers, always-takers, never-takers, and defiers \citep{angrist1996identification}. Compliers are units which their treatment values follow their instrument values, or $d_{0ij} = 0, d_{1ij} = 1$. Always-takers always take the treatment regardless of their instrument values, or $d_{0ij} = d_{1ij}=1$. Never-takers never take the treatment regardless of their instrument values, or $d_{0ij} = d_{1ij}=0$. Defiers act against their instrument values, or $d_{0ij}=1, d_{1ij} = 0$. Assumption (A4) then states that no defiers exist.

Let $N_{\rm CO}$ be the total number of compliers in the population. Under the IV assumptions (A1)-(A4),  the CACE, formally defined as
\begin{equation} \label{effectratio}
   \lambda = \frac{\sum_{i=1}^{I} (r_{1ij}^{(1)}-r_{0ij}^{(0)} ) I(d_{1ij} = 1, d_{0ij} = 0)}{\sum_{i=1}^I \sum_{j=1}^{2} d_{1ij} - d_{0ij} } = \frac{1}{N_{\rm CO}} \sum_{i=1}^{I} (r_{1ij}^{(1)}-r_{0ij}^{(0)} ) I(ij \text{ is a complier})
\end{equation}
can be identified from data by taking the ratio of the estimated ITT effect over the estimated compliance rate. In the context of matching and instrumental variables, \citet{baiocchi2010building} and \citet{kang2016full}  proposed a test statistic to test the null $H_0: \lambda = \lambda_0$ by using differences in the adjusted outcomes
\begin{equation}\label{teststat}
T(\lambda_0) = \frac{2}{I}\sum_{i=1}^I \sum_{j=1}^2 Z_{ij}(R_{ij} - \lambda_0 D_{ij}) - (1-Z_{ij})(R_{ij} - \lambda_0 D_{ij})
\end{equation}
along with an estimator for the variance of $T(\lambda_0)$,
\begin{equation}\label{teststatvar}
S^2(\lambda_0) = \frac{1}{I(I-1)} \sum_{i=1}^I \sum_{j=1}^2 \left(Z_{ij}(R_{ij} - \lambda_0 D_{ij}) - (1-Z_{ij})(R_{ij} - \lambda_0 D_{ij}) - T(\lambda_0) \right)^2
\end{equation}
Under the null, \cite{baiocchi2010building} and \citet{kang2016full} showed that $\frac{T(\lambda_0)}{S(\lambda_0)}$ asymptotically follows a standard Normal distribution. For point estimation, the same set of authors proposed  a Hodges-Lehmann type estimator \citep{hodges1963estimates} which involves solving $\lambda$ in the equation $T(\lambda)/S(\lambda) = 0$. For a $1-\alpha$ \% confidence interval, the equation $T(\lambda)/S(\lambda) \leq z_{1-\alpha/2}$ is solved for $\lambda$, where $z_{1-\alpha/2}$ is the $1-\alpha/2$ quantile of the standard Normal distribution; see \cite{kang2016full} and \cite{kang2018inference} for details.

\subsection{Heterogeneous Complier Average Causal Effect (H-CACE)} \label{hcate}
We formally define the target estimand of interest in the paper, the heterogeneous treatment effect among compliers, or H-CACE. Formally, the H-CACE is defined as the CACE for a subgroup of compliers with a specific value of covariates 
\begin{equation} \label{hcateEqn}
\lambda(\mathbf{x}) = \frac{ \sum_{i=1}^{I} \sum_{j=1}^{2} (r_{1ij}^{(1)} - r_{0ij}^{(0)})I(d_{1ij} = 1, d_{0ij} = 0, \mathbf{X}_{ij} = \mathbf{x}) }{ \sum_{i=1}^{I} \sum_{j=1}^{2} (d_{1ij} - d_{0ij}) I(\mathbf{X}_{ij} = \mathbf{x}) }
\end{equation}
Because two units are assumed to have identical covariate values within each matched pair, $\lambda(\mathbf{x})$ can be rewritten as taking a subset of $I$ matched pairs with identical covariates $\mathbf{x}$, say $s \subseteq \{1,\ldots,I\}$ 
\begin{equation} \label{subseteffectratio}
     \lambda_s = \frac{ \sum_{i \in s} \sum_{j=1}^{2} r_{1sij}^{(d_{1sij})} - r_{0sij}^{(d_{0sij})} }{ \sum_{i \in s} \sum_{j=1}^{2} d_{1sij} - d_{0sij}}
\end{equation}
Since each H-CACE $\lambda_s$ has the same form as the original CACE, we can apply the test statistic in Section \ref{ivcate}. Formally, consider the subset-specific hypothesis $H_{0s}: \lambda_s = \lambda_0$ against $H_{1s}: \lambda_s \neq \lambda_0$. Then, we can use the test statistic (\ref{teststat}) with variance (\ref{teststatvar}) among the pairs specific to subset $s$.

Also, under assumptions (A1)-(A4), for a mutually exclusive and exhaustive grouping $\mathcal{G} = \{s_1, \dots, s_G\}$ of a set of pairs $s_g \subseteq \{1, \dots, I\}$ with at least one complier within each subgroup $s_g$, the original CACE is equal to a weighted version of H-CACE: 
\[
\lambda = \sum_{g=1}^G w_{s_g}\lambda_{s_g}, \quad{}
w_{s_g} = \frac{\sum_{i \in s_g} \sum_{j=1}^2 d_{1sij} - d_{0sij}}{N_{\rm CO}}.
\]
An implication of this expression is that typical analysis of the CACE hides underlying effect heterogeneity. For example, suppose there are two subgroups defined by a binary covariate, say male or female, and consider two scenarios. In the first scenario, among compliers, 80\% are male and 20\% are female. Also, the H-CACE of male is 1.25 and the H-CACE of female is 0. In the second scenario, the male/female complier proportions remain the same, but the H-CACE of male is now 1.5 and the H-CACE of female is -1. In both scenarios, the CACE is $1$. But, in the second scenario, females have a negative treatment effect. By only studying the CACE, as is typical in practice, variations in the treatment effects defined by H-CACEs would have been masked. The next section presents a way to unwrap the CACE and discover novel H-CACEs. 

\subsection{Discovering and Testing Novel H-CACE}\label{dischcate}
A naive approach to finding and testing novel H-CACE would be to exhaustively test every H-CACE for every subset of matched pairs and gradually aggregate them based on their covariate similarities with appropriate statistical tests. However, this procedure will not only lead to false discoveries, but it will also be grossly underpowered. 

Instead, based on the work by \citet{hsu2015strong}, we propose to use exploratory machine learning methods, such as CART, to discover and aggregate matched pair into subgroups with similar treatment effects, formulating grouping $\mathcal{G}$. We will then use closed testing to test effect heterogeneity defined by these groups while strongly controlling the familywise error rate; see Algorithm \ref{genprocedure} for details. 

\IncMargin{1em}
\begin{algorithm}[ht]
\DontPrintSemicolon
\SetKwInOut{Input}{Given}\SetKwInOut{Output}{Output}
\SetKwBlock{NewBlock}{}{}
\Input{Observed outcome $R$, binary instrument $Z$, exposure $D$, covariates $X$, null value $\lambda_0$ for testing, and desired familywise error rate $\alpha$}
\BlankLine
\nl Pair match on observed covariates.\;
\nl Calculate absolute value of pairwise differences for each matched pair
\[\bigl\lvert Y_i \bigr\rvert = \bigl\lvert(Z_{i1} - Z_{i2})(R_{i1} - \lambda_0 D_{i1} - (R_{i2} - \lambda_0 D_{i2})) \bigr\rvert\]\;
\vspace{-4mm}
\nl Construct mutually exclusive and exhaustive grouping 
using CART. Here, CART takes $|Y_i|$ as the outcome and $\mathbf{X}_i$ from each matched pair as the predictors. CART outputs a partition of covariates, which we use to define $\mathcal{G}= \{s_1, \dots, s_G\}$ and consequently, H-CACEs. \;
\nl Run closed testing \citep{marcus1976closed} to test statistical significance of H-CACEs for every subset  $\mathcal{L} \subseteq \{1, \dots, G\}$ of $G$ groups where each subset defines the null hypothesis of the form $H_{0 \mathcal{L}}: r_{1ij}^{(d_{1ij})} - r_{0ij}^{(d_{0ij})} = \lambda_0 (d_{1ij} - d_{0ij})$ for all $g \in \mathcal{L}$. Formally, run \;
\Indp \vspace{-3mm}

\Indm\SetAlgoNoLine\NewBlock{
    \SetAlgoLined\For{$ \mathcal{L} \subseteq \{1, \dots, G \} $}{ 
        \If{$H_{0\mathcal{L}}$ \emph{has not been accepted}}{
        Calculate $T_{s}(\lambda_0)$ and $S_{s}(\lambda_0)$ for $s=\bigcup_{g \in \mathcal{L}} s_g$\;
        \If{$\bigl\lvert\frac{T_{s}(\lambda_0)}{S_{s}(\lambda_0)}\bigr\rvert \leq z_{1-\alpha/2}$}{
        Accept the null hypothesis $H_{0\mathcal{K}}: \lambda_{\mathcal{K}} = \lambda_0$ for all $\mathcal{K} \subseteq \mathcal{L} \subseteq \{1, \dots, \mathcal{G}\}$}
        \Else{Reject $H_{0 \mathcal{L}}$}
        }
    }
}
\Output{Estimated and inferential quantities for H-CACEs (e.g. effect size, confidence interval, $p$-value) and novel H-CACEs from closed testing.}
\caption{Proposed method to discover and test effect heterogeneity in IV with matching}\label{genprocedure}
\end{algorithm}\DecMargin{1em}

We explain in some detail the key steps in Algorithm \ref{genprocedure}. First, the specification of the null value $\lambda_0$ is for testing the sharp null of the form $H_0: r_{1ij}^{(d_{1ij})} -  r_{0ij}^{(d_{0ij})}  =\lambda_0(d_{1ij}  - d_{0ij}$); this sharp null implies the ``weak'' or composite null $H_0: \lambda = \lambda_0$ \citep{baiocchi2010building}. Setting $\lambda_0 = 0$ would test whether the H-CACE is zero or not and is the typical choice in most applications unless other null values are of scientific interest. Second, under the sharp null, the absolute value of the difference in adjusted outcomes between pairs, $\bigl\lvert Y_i \bigr\rvert = \bigl\lvert(Z_{i1} - Z_{i2})(R_{i1} - \lambda_0 D_{i1} - (R_{i2} - \lambda_0 D_{i2})) \bigr\rvert$, obscures the instrument assignment vector making $\bigl\lvert Y_i \bigr\rvert $ a function of $\mathcal{F}$ only, a fixed (and unknown) quantity. In contrast, $Y_i$ is a function of both $\mathcal{F}$ and $\mathbf{Z}$. Consequently, conditional on $\mathcal{F}$, building a CART tree based on $|Y_i|$ as the response and $\mathbf{X}_{i}$ as the explanatory variables does not affect the distribution of $\mathbf{Z}$. The distribution of $\mathbf{Z}$ within each pair remains $1/2$ as stated in assumption (A3) and is a key ingredient to achieve familywise error rate control for downstream inference; see our discussion on honest inference below. 

Third, Algorithm \ref{genprocedure} applies closed testing, a multiple inference procedure by \cite{marcus1976closed}, to test for multiple hypotheses about H-CACEs generated by CART's grouping $\mathcal{G} = \{s_1, \dots, s_G\}$. Broadly speaking, closed testing will test sharp null hypotheses defined by every parent and child node of the estimated tree from CART and reject/accept these hypotheses while controlling for multiple testing issues; see Section \ref{sec:app_closedtest} and Figure \ref{fig:ctex} for visualizations. A bit more formally, for each pair, closed testing will test the global sharp null hypothesis $H_{0}: r_{1ij}^{(d_{1ij})} - r_{0ij}^{(d_{0ij})} = \lambda_0 (d_{1ij} - d_{0ij})$ and  subsequent subset-specific hypotheses $H_{0 \mathcal{L}}: r_{1ij}^{(d_{1ij})} - r_{0ij}^{(d_{0ij})} = \lambda_0 (d_{1ij} - d_{0ij})$ for all $g \in \mathcal{L}$, where $\mathcal{L}$ is a subset of the $G$ groups formed by CART. We note that the difference between the global null and the subset-specific nulls is only in the pairs under consideration; all the nulls use the test statistics introduced in Section \ref{ivcate}. Also, the subset-specific hypotheses imply $H_{0 \mathcal{L}}: \lambda_s=\lambda_0$ for $s=\bigcup_{g \in \mathcal{L}} s_g$. 
Closed testing would only reject the subset-specific hypotheses $H_{0 \mathcal{L}}$ if all of the $p$-values from superset hypotheses $H_{0 \mathcal{L}'}$, $\mathcal{L} \subseteq \mathcal{L}'$, are less than $\alpha$. 

As mentioned earlier, the key step of using $|Y_i|$ in CART allows for both discovery and downstream honest testing of H-CACEs via closed testing; again, honesty refers to control of the familywise error rate at level $\alpha$ when testing multiple hypotheses about H-CACEs that were discovered by data. Because $|Y_i|$ is not a function of $\mathbf{Z}$, the original distribution of $\mathbf{Z}$ is preserved and  we can use the standard randomization inference null distribution to honestly test each H-CACE discovered by CART. In fact, as noted in \cite{hsu2015strong}, this honesty property is preserved for any supervised machine learning algorithm that forms groups based on $\mathbf{X}$ and $|Y|$ as well as subsequent visual heuristics to check the algorithms' performance.  Also, in recent work on estimating heterogeneous causal effects \citep{chernozhukov2018generic,athey2019generalized,park2020groupwise}, the notion of "honest" inference is often tied to sample splitting, where one subsample is used to discover different subgroups or to estimate nuisance parameters and the other subgroup is used to test the causal effect. Our approach does not have to use sample splitting to obtain honest inference and Proposition \ref{algorithmprop} shows this principle formally; Web Appendix A shows this principle numerically.

\begin{proposition}[Familywise Error Rate Control of Algorithm \ref{genprocedure}] \label{algorithmprop}
Under the sharp null hypotheses $H_{0\mathcal{L}}$ in Algorithm \ref{genprocedure}, the conditional probability given $(\mathcal{F}, \mathcal{Z}, \mathcal{G})$ that the algorithm makes at least one false rejection of the set of hypotheses is at most $\alpha$.
\end{proposition}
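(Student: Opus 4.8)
The plan is to split the argument into its combinatorial part (closed testing) and its probabilistic part (validity of a single randomization test), and to concentrate the real work on the latter. First I would observe that the hypotheses Algorithm~\ref{genprocedure} manipulates, $\{H_{0\mathcal{L}}:\emptyset\neq\mathcal{L}\subseteq\{1,\dots,G\}\}$, form the intersection closure of the $G$ elementary sharp nulls $H_{0\{g\}}$: each $H_{0\mathcal{L}}$ is the single sharp null $r_{1ij}^{(d_{1ij})}-r_{0ij}^{(d_{0ij})}=\lambda_0(d_{1ij}-d_{0ij})$ restricted to the pairs in $s=\bigcup_{g\in\mathcal{L}}s_g$, so $H_{0\mathcal{L}}=\bigcap_{g\in\mathcal{L}}H_{0\{g\}}$ and $H_{0\mathcal{L}}\cap H_{0\mathcal{L}'}=H_{0,\mathcal{L}\cup\mathcal{L}'}$. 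Next I would check that the accept/reject bookkeeping in the inner loop is exactly the closed test of \citet{marcus1976closed}: because an acceptance of $H_{0\mathcal{L}'}$ is propagated to every $\mathcal{K}\subseteq\mathcal{L}'$ while a rejection is not, $H_{0\mathcal{L}}$ is ultimately rejected if and only if the local test (``reject when $\lvert T_s(\lambda_0)/S_s(\lambda_0)\rvert>z_{1-\alpha/2}$'') rejects $H_{0\mathcal{L}'}$ for every $\mathcal{L}'\supseteq\mathcal{L}$. By the closed testing principle — which applies verbatim under the conditional law $P(\cdot\mid\mathcal{F},\mathcal{Z},\mathcal{G})$, under which the grouping, and hence which nulls are true, are fixed — it then suffices to show that whenever $H_{0\mathcal{L}}$ is true the local test has conditional level at most $\alpha$ given $(\mathcal{F},\mathcal{Z},\mathcal{G})$.

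The heart of the proof is this conditional validity. Fix $\mathcal{L}$ with $H_{0\mathcal{L}}$ true and set $s=\bigcup_{g\in\mathcal{L}}s_g$. Under this sharp null, for $i\in s$ and $j=1,2$ we can write $R_{ij}-\lambda_0 D_{ij}=(r_{1ij}^{(d_{1ij})}-\lambda_0 d_{1ij})Z_{ij}+(r_{0ij}^{(d_{0ij})}-\lambda_0 d_{0ij})(1-Z_{ij})$, and the two bracketed quantities coincide; calling their common value $a_{ij}$, a function of $\mathcal{F}$ alone, we get $R_{ij}-\lambda_0 D_{ij}=a_{ij}$ regardless of $Z_{ij}$. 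Hence the $i$-th summand of $T_s(\lambda_0)$ equals $(Z_{i1}-Z_{i2})(a_{i1}-a_{i2})$, the quantity $S_s^2(\lambda_0)$ is the usual paired variance estimator built from these same terms, and in particular $\lvert Y_i\rvert=\lvert a_{i1}-a_{i2}\rvert$ is $\sigma(\mathcal{F})$-measurable for every $i\in s$. Now I would argue that conditioning on $\mathcal{G}$ does not perturb the randomization distribution of $(Z_{ij})_{i\in s}$. The grouping $\mathcal{G}$ is the output of CART on $\{(\lvert Y_i\rvert,\mathbf{X}_i)\}_{i=1}^I$ (deterministic given its inputs, or with any internal randomness absorbed into the conditioning event $\mathcal{G}$); since $\lvert Y_i\rvert$ is $\sigma(\mathcal{F})$-measurable for $i\in s$ and the remaining $\lvert Y_i\rvert$ depend on $\mathbf{Z}$ only through $(Z_{ij})_{i\notin s}$, the grouping $\mathcal{G}$ is a measurable function of $(\mathcal{F},(Z_{ij})_{i\notin s})$. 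By assumption (A3), conditional on $(\mathcal{F},\mathcal{Z})$ the instruments are independent fair coins within pairs and independent across pairs, so $(Z_{ij})_{i\in s}$ is independent of $\big((Z_{ij})_{i\notin s},\mathcal{G}\big)$ given $(\mathcal{F},\mathcal{Z})$; therefore the conditional law of $(Z_{ij})_{i\in s}$ given $(\mathcal{F},\mathcal{Z},\mathcal{G})$ equals its law given $(\mathcal{F},\mathcal{Z})$, namely the usual uniform within-pair sign distribution. Consequently, conditional on $(\mathcal{F},\mathcal{Z},\mathcal{G})$, the ratio $T_s(\lambda_0)/S_s(\lambda_0)$ has exactly the studentized matched-pair randomization distribution of the effect-ratio statistic under the sharp null — the very distribution analyzed by \citet{baiocchi2010building} and \citet{kang2016full} — which is asymptotically standard Normal as $\lvert s\rvert\to\infty$ under their regularity conditions. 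Hence $P\big(\lvert T_s(\lambda_0)/S_s(\lambda_0)\rvert>z_{1-\alpha/2}\mid\mathcal{F},\mathcal{Z},\mathcal{G}\big)\le\alpha$ asymptotically, and the closed testing reduction of the first paragraph finishes the proof.

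I expect the genuine obstacle to be precisely the middle step of the second paragraph: establishing that feeding the data-driven CART grouping $\mathcal{G}$ into the downstream tests leaves the instrument's within-pair randomization intact on the pairs actually being tested. This is exactly what Step 2 of Algorithm~\ref{genprocedure} is designed to secure — taking absolute values of the \emph{adjusted} pair differences $R-\lambda_0 D$ strips the tree's response variable of all dependence on $\mathbf{Z}$ on the pairs where the sharp null holds, and cross-pair independence in (A3) neutralizes any residual dependence entering $\mathcal{G}$ through the other pairs (and through any internal randomness of CART, which we absorb into the conditioning). Two caveats I would flag without dwelling on them: the local test uses the Normal quantile $z_{1-\alpha/2}$ rather than the exact sign-permutation quantile, so the level statement — and hence the familywise bound — is asymptotic and inherits the moment/Lindeberg and $\lvert s\rvert\to\infty$ conditions of Section~\ref{ivcate} within every subgroup considered; and the global-sharp-null case is the clean specialization in which $\mathcal{G}$ is itself $\sigma(\mathcal{F})$-measurable, so no independence argument is needed at all.
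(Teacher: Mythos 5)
Your proof is correct and follows the same closed-testing route as the paper's: reduce the familywise claim to the level of the single test of the intersection of all true nulls, then verify that this test is (asymptotically) level $\alpha$ conditional on $(\mathcal{F},\mathcal{Z},\mathcal{G})$. In fact you supply the one step the paper's proof merely asserts --- that conditioning on the CART grouping leaves the within-pair randomization distribution intact on the pairs where the sharp null holds, because $\lvert Y_i\rvert$ is $\sigma(\mathcal{F})$-measurable there and (A3) gives cross-pair independence --- and you state the rejection probability correctly as at most $\alpha$ (asymptotically, via the normal quantile), where the paper's proof writes $\alpha/2$.
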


We now discuss some important limitations of Proposition \ref{algorithmprop} and the proposed algorithm. First, our algorithm's guarantee on controlling the familywise error rate is only for testing sharp nulls. As noted in Section 2, page 289 of \cite{rosenbaum2002covariance}, testing for sharp nulls does not necessarily imply that the true data generating process always follow the sharp null and as such, the proposition makes no claims about how the true data generating process actually looks like. Having said that, the limitation of testing a sharp null versus a weak null has been discussed extensively; see Sections 3 and 4 of \cite{rosenbaum2002covariancerejoinder}, \cite{ding2017paradox}, \cite{fogarty2018regression}, and \cite{fogarty2020studentized}. But, a recent work by \cite{fogarty2020biased} has shown that testing the sharp null based on our test is an asymptotically valid test for the weak null; see Remark 1 of their Proposition 1. This suggests that the guarantees from Proposition \ref{genprocedure} will likely hold even if we are testing weaker nulls with our algorithm. Second, a price we pay for using $|Y_i|$ to achieve honest inference is that we collapse the sign of the effect and therefore, CART treats subgroups with positive or negative effects equally. This is potentially problematic in settings where two different covariate values lead to identical effects (in magnitude), but different in signs; see \cite{hsu2013effect} for additional discussions and Web Appendix D for a numerical illustration. For our Medicaid example, if there is a partition of the covariates that leads to two identical H-CACEs in magnitude, but different in signs, our algorithm may not be able to detect the two subgroups. But, since using Medicaid is unlikely to be harmful, we don't believe this will be a significant concern in our example, especially compared to the alternatives of not obtaining honest inference. Third, Proposition \ref{algorithmprop} does not describe the algorithm's statistical power to detect effect heterogeneity. The next section uses a simulation study to addresses power and other factors influencing discovery of H-CACEs.

\section{Simulations}

We conduct a simulation study to measure the performance of the proposed algorithm in two ways: (1) statistical power to test H-CACEs and (2) recovery of true effect modifier variables. Throughout the simulation study, we vary the the compliance rate because prior works have shown that performance of IV methods depends heavily on the compliance rate, or more generally on the instrument's association to the treatment (i.e. instrument strength). In particular, problems can arise when the compliance rate is low; see \cite{staiger1994instrumental}, \citet*{stock2002survey}, and references therein for more details. 

Following \cite{hsu2015strong}, each simulation setting sets the potential outcomes $r_{0ij}^{(d_{0ij})}$ and $r_{1ij}^{(d_{1ij})}$, potential treatments $d_{0ij}$ and $d_{1ij}$, and covariates $\mathbf{X}_{ij}$ of each unit $j$ within each of the $I= 2000$ pairs. There are six pre-instrument covariates, each generated from independent Bernoulli trials with $0.5$ probability of success. At most two covariates, $x_1$ and $x_2$, modify the treatment effect. That is, H-CACEs defined by $\lambda(x_1,\dots,x_6)$ in equation \eqref{hcateEqn} depend on at most two covariates, $x_1$ and $x_2$. Also, because both $x_1$ and $x_2$ are binary, there are at most four different H-CACEs defined by different combinations of binary variables $\lambda_{00}$, $\lambda_{01}$, $\lambda_{10}$, and $\lambda_{11}$; for notational simplicity, we use $\lambda_{x_1x_2}$ to represent equation \eqref{hcateEqn}. Similar to the design of the OHIE, the data is generated under the assumption of one-sided compliance. This means that for every unit, the potential treatment having not received the instrument is 0, $d_{0ij}=0$. The potential treatment having received the instrument, $d_{1ij}$, is then a Bernoulli trial with success rate $\pi$; $\pi$ is also the compliance rate. In Web Appendix C, we consider the setting in which the compliance rate may depend on $x_1$ and $x_2$, say via $\pi_{x_1x_2}$. Finally, the potential outcomes having not received the instrument $r_{0ij}^{(d_{0ij})}$ are from a standard normal distribution $r_{0ij}^{(d_{0ij})} \sim N(0,1)$, and the potential outcomes having received the instrument $r_{1ij}^{(d_{1ij})}$ are a function of the H-CACE $r_{1ij}^{(d_{1ij})} = r_{0ij}^{(d_{0ij})} + d_{1ij} \lambda_{x_1x_2}$. Once all the potential treatment and outcomes are generated, the observed treatment and outcome are determined based on the value of the instrument and SUTVA. %, which is generated from a fair coin toss. 
Finally, the regression tree in Algorithm \ref{genprocedure} is estimated in R using the package \emph{rpart}, version 4.1-15 \citep*{therneau2015package}. Unless specified otherwise, we use a complexity parameter of 0.005 (half of the default setting) and use defaults for the rest of \emph{rpart}'s parameters. Our proposed method is referred to as ``H-CACE" in the results below.

For comparison, we also apply a recent method by \cite{bargagli2019heterogeneous} to discover and test H-CACEs. Briefly, their method, which we refer to as ``BCF-IV" in the results below, utilizes modern tree-based methods \citep{athey2015machine,hahn2017bayesian} to estimate heterogeneous intent-to-treat (ITT) effects and suggests different sub-populations of interest; we remark that unlike our proposal, their method does not use matching and uses the original, untransformed $R_{ij}$ inside the tree fitting step. Then, for each sub-population, the method estimates and tests its H-CACE using the two-stage least square estimator. We use the \emph{bcf\_iv} function available on the authors' Github repository and use the default parameters of \emph{rpart} and \emph {bcf} \citep{hahn2017bayesian}.

\subsection{Statistical Power} \label{sec:power}

To measure a method's statistical power, we divide the number of false null hypotheses rejected by the total number of false null hypotheses suggested by the method. We refer to this rate as the true discovery rate and is a common measure of statistical power in multiple testing settings.

We compute the true discovery rate at varying levels of instrument strength and four heterogeneous treatment settings: (a) No Heterogeneity, (b) Slight Heterogeneity, (c) Strong Heterogeneity, and (d) Complex Heterogeneity. In setting (a), there are no effect modifiers resulting in one subgroup with equal treatment effects, $\lambda_{00} = \lambda_{01} = \lambda_{10}=\lambda_{11} = 0.5$. In setting (b), there is one effect modifier $x_1$ resulting in two subgroups with similar but different treatment effects, $\lambda_{00}=\lambda_{01} = 0.7$ and $\lambda_{10}=\lambda_{11} = 0.3$. In setting (c), there is one effect modifier $x_1$ resulting in two subgroups with dissimilar treatment effects, $\lambda_{00}=\lambda_{01} = 0.9$ and $\lambda_{10}=\lambda_{11} = 0.1$. And, in setting (d),  there are two effect modifiers $x_1$ and $x_2$ resulting in three subgroups, one with a strong effect, two with no effects, and the last group with the average effect, $\lambda_{00}=1.5$, $\lambda_{01}=\lambda_{10}= 0$ and $\lambda_{11} = 0.5$. In all four settings, the overall complier average causal effect is $\lambda=0.5$. 

We repeat the simulation 1000 times for each treatment heterogeneity and instrument strength combination.  We remark that the null hypothesis is that of no treatment effect (i.e. $\lambda_0 = 0$) and, since all treatment heterogeneity settings have an overall complier average treatment effect of $\lambda =0.5$, only the hypotheses consisting of pairs with $\lambda_{x_1x_2} =0$ are true null hypotheses.

\begin{figure}[htbp]
    \centerline{\includegraphics[height=8cm,width=16cm, keepaspectratio]{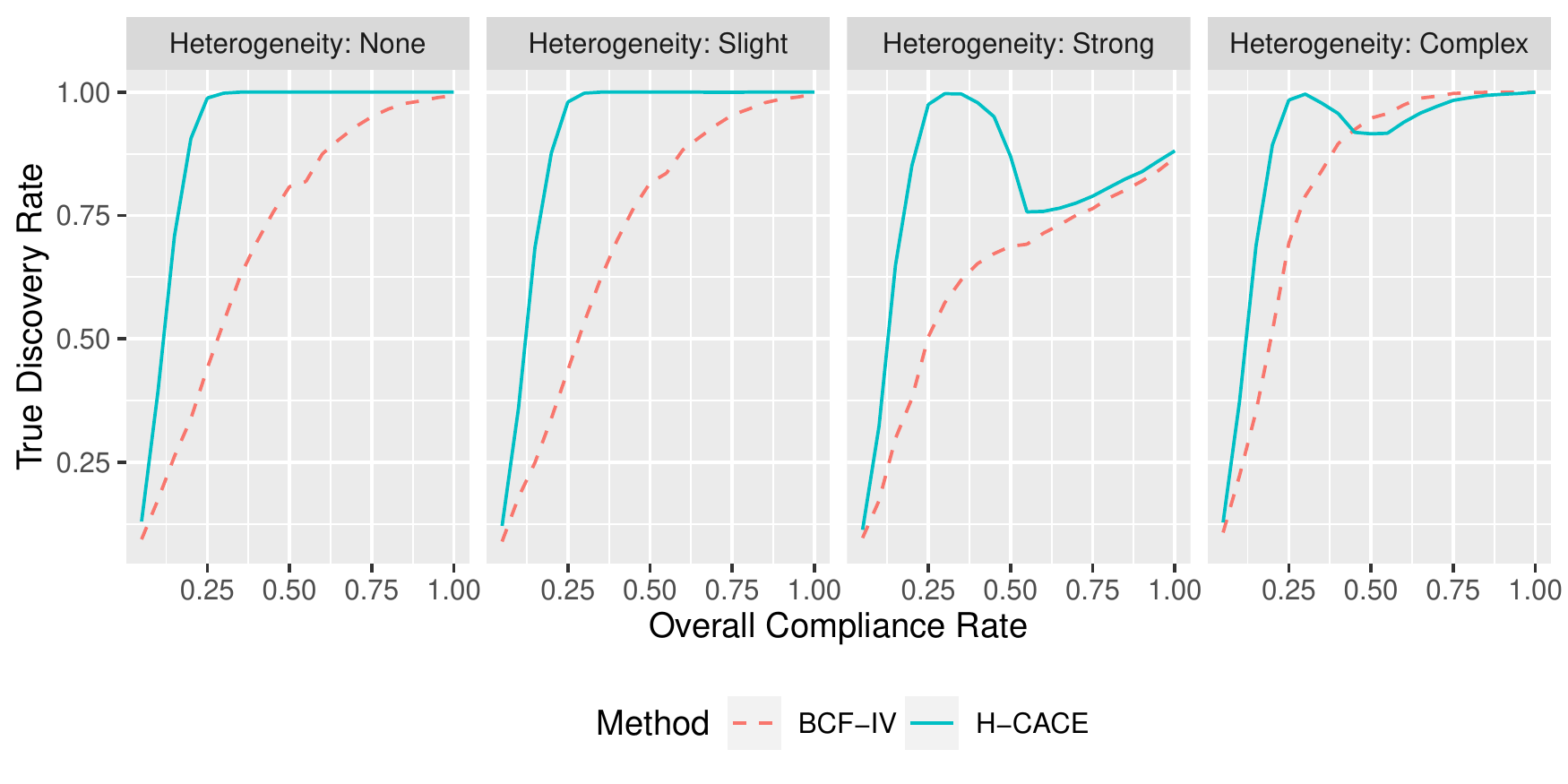}}
    \caption{True discovery rate as a function of the compliance rate and heterogeneity settings. The dashed and solid lines denote the BCF-IV procedure and our proposed algorithm, respectively.}
    \label{fig:hcaceBCFtdr}
\end{figure}

Figure \ref{fig:hcaceBCFtdr} shows the true discovery rate under four treatment heterogeneity settings. We see that as the compliance rate (i.e. instrument strength) increases, the true discovery rate of our method grows across all settings. In particular, our approach has the best power in the region where the compliance rate is low, roughly under 40\%.  Even when the compliance rate is high, we see that BCF-IV generally has lower power than our method across different heterogeneity settings.

We also take a moment to explain a counter-intuitive dip in our method's true discovery rate under the strong and complex heterogeneity settings in Figure \ref{fig:hcaceBCFtdr}. Briefly, this drop in the true discovery rate is due to the formation of leaves with smaller treatment effects.  As the compliance rate becomes large, these small effects begin to be get suggested by CART. But, the power to reject the null in favor of these small effects are small and the overall true discovery rate dips briefly. However, as the compliance rate reaches one, we see the true discovery rate of our method begin to climb again. Web Appendix E contains additional details surrounding this phenomena.

\subsection{False Positive Rate and F-Score} \label{sec:varsel}

We also assess our algorithm's ability to select variables among $\mathbf{X}_{ij}$ that are true effect modifiers. Specifically, we define a true effect modifier as a variable among $\mathbf{X}_{ij}$ where the tree splits on the variable and rejects one of the hypotheses of the split's children. A false effect modifier is a variable $X_{ij}$ where the tree either does not split on the variable or fails to reject on one of the hypotheses defined by the variable. Using this definition, we use the F-score and the false positive rate (FPR) common in the classification literature to measure a method's ability to select true effect modifiers. The F-score is the harmonic mean of recall and precision, where precision is the number of true positives (i.e. correctly selected true effect modifiers) out of the positive predictions (i.e. variables selected to be effect modifiers) and recall is the number of true positives out of the true conditions (i.e. true effect modifiers). More precisely, the F-score can be rewritten in terms of true positives (TP), false positives (FP), and false negatives (FN). 
\[F = \frac{2(\text{precision} \cdot \text{recall})}{\text{precision} + \text{recall}} =  \frac{TP}{TP + 0.5(FP + FN)}
\]
\noindent The F-score ranges from zero to one with a value closer to one implying greater accuracy. The FPR is defined as the number of false positives (i.e. incorrectly selected as a true effect modifier) out of the negative conditions (i.e. false effect modifiers) and ranges from zero to one, with a value close to zero being preferred.

\begin{figure}[htbp]
    \centerline{\includegraphics[height=9cm,width=17cm, keepaspectratio]{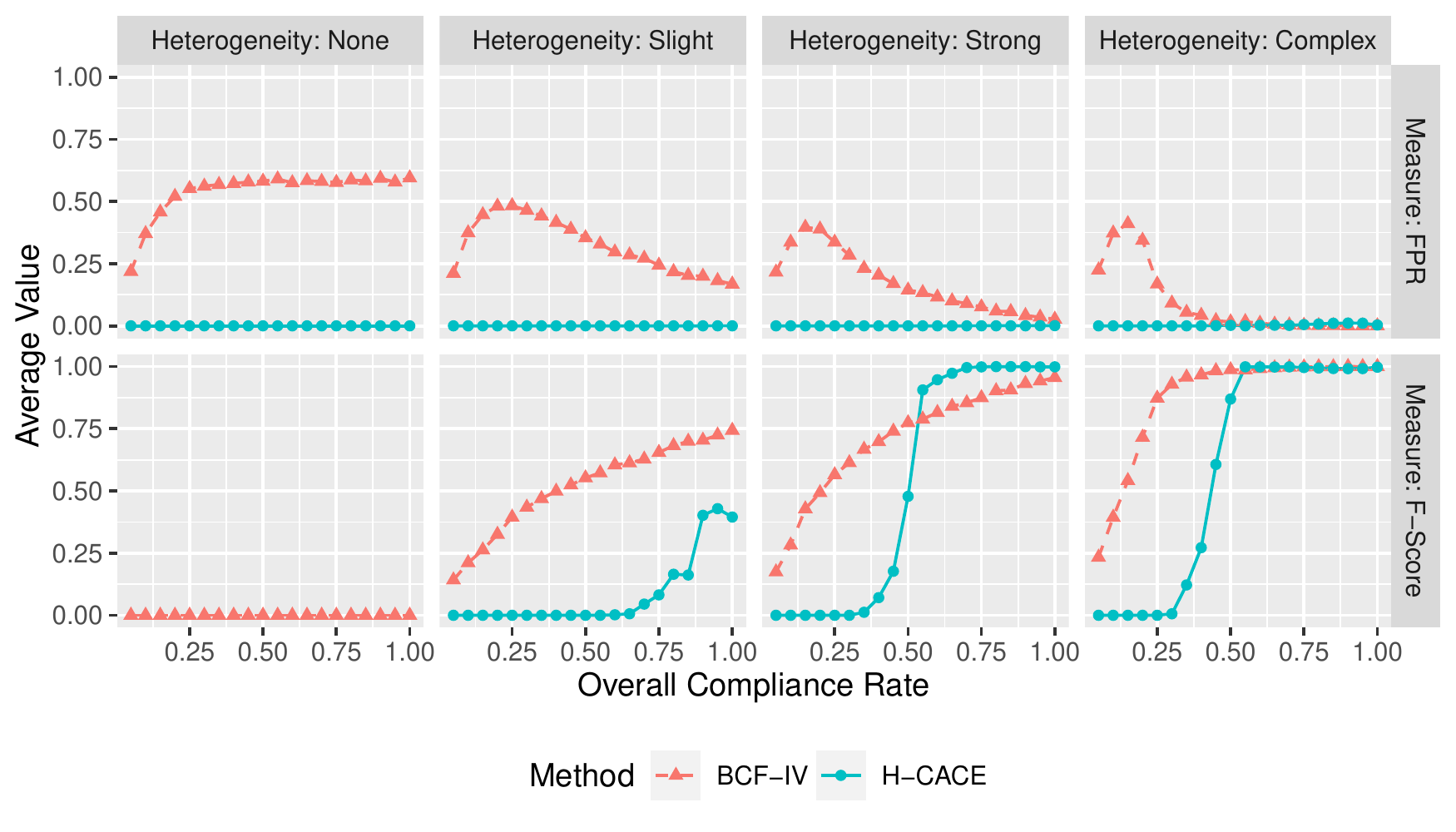}}
    \caption{F-score and false positive rate as a function of compliance rate and heterogeneity settings. The shapes and colors denote the two methods.}
    \label{fig:hcaceBCFem}
\end{figure}

We use the same four heterogeneity settings of (a) No Heterogeneity, (b) Slight Heterogeneity, (c) Strong Heterogeneity, and (d) Complex Heterogeneity. Figure \ref{fig:hcaceBCFem} shows the results of the F-score and FPR of our proposed algorithm and BCF-IV. Across four settings, our proposal has a false positive rate of nearly zero, never falsely declaring a variable to be a true effect modifier when it isn't in reality. In contrast, BCF-IV has a larger false positive rate, declaring effect modifiers to be true effect modifiers even if they are actually false effect modifiers. For example, in setting (a) without any effect modifiers, BCF-IV has a false positive rate hovering above 50\% whereas our method has a false positive rate of 0\%. In other words, BCF-IV falsely declared at least one of the six covariates as true effect modifiers roughly 50\% of the time whereas our method never declared any of the six covariates as true effect modifiers.

However, our algorithm's F-score is generally smaller than that from BCF-IV unless the compliance rate is high and the effect heterogeneity is strong. In particular, when the compliance rate is roughly under 50\% or if two subgroups have similar effect sizes, our method cannot select the true effect modifiers as well as BCF-IV. But, when the compliance rate is above 50\% and the effect heterogeneity is strong, our algorithm has similar F-scores as BCF-IV. Overall, the low F-score is a price that our algorithm pays for making sure that the FPR is small. In contrast, BCF-IV has a higher F-score, but pays a price with a high FPR.

In the supplementary materials Web Appendix C, D, and F, we conduct additional simulation studies where we (i) vary the compliance rate by covariates, (ii) allow H-CACEs to be equal in magnitude, but opposite in direction to measure the effect of using $|Y_i|$ in our algorithm, and (iii) demonstrate the two methods in a simulation that closely resembles the data from the OHIE. To summarize the results, for (i) and (iii), the story is very similar to what's presented here, where our method has high true discovery rate, low FPR and F-score compared to those from BCF-IV. For (ii), as expected, we find that our method has a low true discovery rate, FPR, and F-score. But, as soon as the magnitudes of the H-CACEs are dissimilar, our method returns to the case presented here.

\subsection{Takeaways from the Simulation Study} \label{sec:sum_sim}
Overall, the simulation study shows that our algorithm has large statistical power and low false positive rate across all settings. %Also, our algorithm had a low false positive rate, demonstrating its ability to reliably not declare a false effect modifier as a true effect modifier. 
In contrast, the BCF-IV algorithm had low power and produced large FPRs, especially when no effect heterogeneity exists in the data; in other words, BCF-IV often falsely declared an effect modifier to be true effect modifiers. But, our algorithm generally had a low F-score compared to that from BCF-IV except in regimes where the effect heterogeneity is strong and the compliance rate is high. Clearly, no one method uniformly dominates the other in every data generating model under every metric of performance. Instead, we hope the simulation study here alerts investigators about the strengths and limitations of our algorithm compared to existing approaches. 

We also remark that the simulation results in Sections \ref{sec:power} and \ref{sec:varsel} do not necessarily contradict each other. Roughly speaking, the result in Section \ref{sec:power} concerns the ability for algorithms to have high \emph{statistical power} whereas the result in Section \ref{sec:varsel} concerns the ability for algorithms to \emph{select} variables. An algorithm like BCF-IV could liberally select many effect modifiers, generally leading to a high F-score and potentially a high FPR. But, the power to test the nulls suggested by the selected effect modifiers could be low since not only may some of these selected variables not be true effect modifiers, but also the selected variables will define many subgroups which likely contain few units to test the corresponding nulls. In contrast, an algorithm like ours could conservatively select effect modifiers, leading to a small F-score and low FPR. But, the power to test the nulls suggested by the selected variables could be high since most of the selected variables will be true effect modifiers. In short, our method is somewhat cautious, but certain whereas BCF-IV is optimistic, but somewhat error-prone.

\section{Analysis of the Oregon Health Insurance Experiment}

\subsection{Data Description}
We use our method to analyze the heterogeneous effects of Medicaid on the number of days an individual's physical or mental health prevented their usual activities in the past month. In brief, the OHIE collected administrative data on hospital discharges, credit reports, and mortality, survey data on health care utilization, financial strain, and overall health, and pre-randomization demographic data. There were 11,808 lottery winners and 11,933 lottery losers in the publicly available survey data for a total sample size of 23,741 individuals; see \citet{finkelstein2012oregon} for details.

We matched on the following demographic, pre-randomization variables recorded by \citet{finkelstein2012oregon}: sex, age, whether they preferred English materials when signing up for the lottery, whether they lived in a metropolitan statistical area (MSA), their education level (less than high school, high school diploma or General Educational Development (GED), vocational or 2-year degree, 4-year college degree or more), and self-identified race (as the individual reported in the survey). Since some of the covariates had missing data, namely self-identifying as Hispanic or Black and their level of education, we also matched on indicators of their missingness; see Section 9.4 of \citet{rosenbaum2010design} for details. We used the R package \emph{bigmatch}, version 0.6.1,  \citep{yu2019bigmatch} with an optimal caliper and a robust rank-based Mahalanobis distance to generate our optimal pair match. Figure \ref{fig:match_diagnostic} shows covariate balance before and after matching.

\begin{figure}[hbtp]
    \centerline{\includegraphics[width=14cm, height=14cm, keepaspectratio]{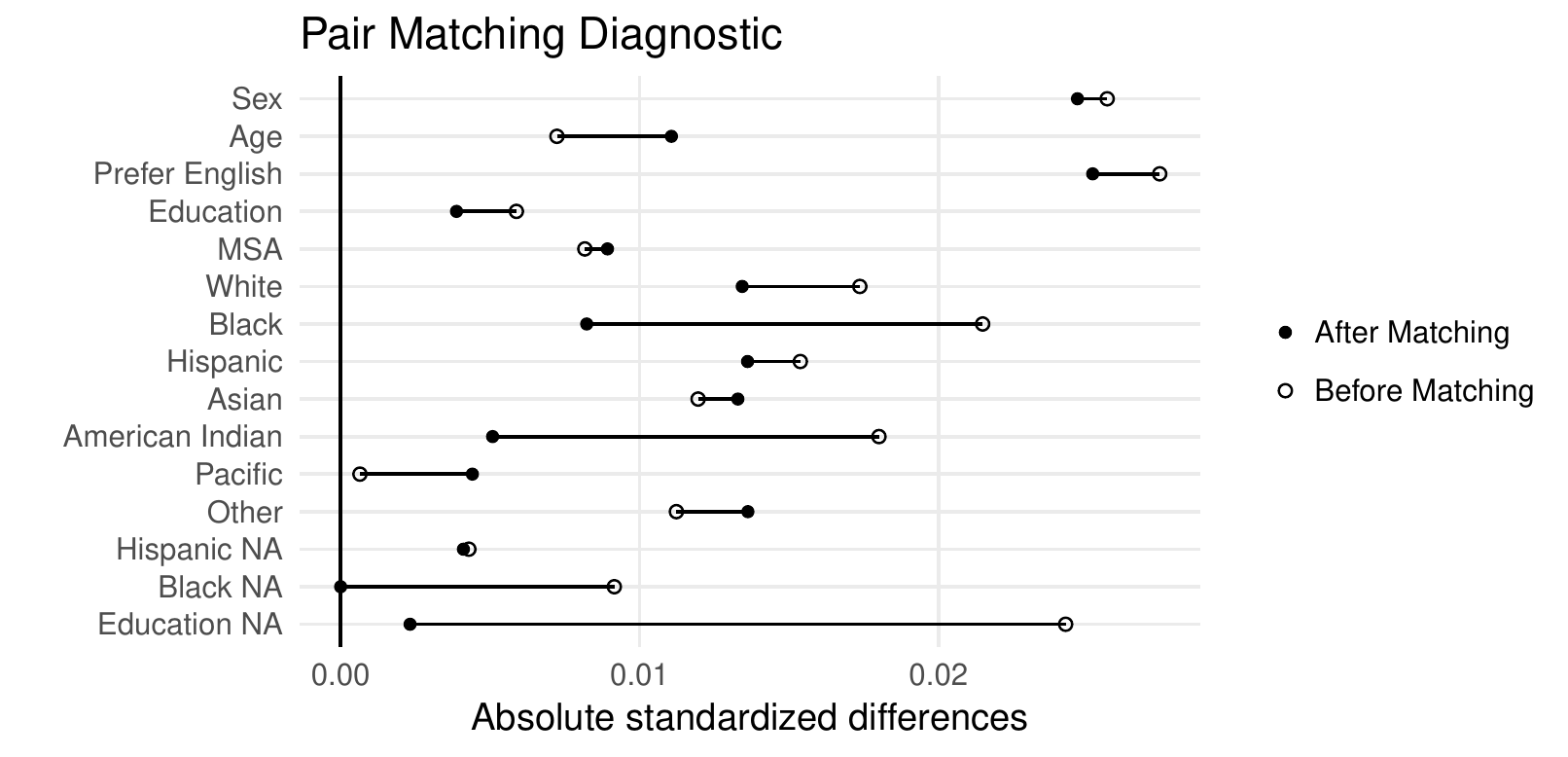}}
    \caption{Covariate balance as measured by  difference in means of the covariates between the treated and control groups, before and after matching.}
    \label{fig:match_diagnostic}
\end{figure}

For the majority of covariates, the matching algorithm did little to change the absolute standard differences between lottery winners and losers. This is not surprising given that the lottery was randomized. However, the indicator for missingness in education, self-identified American Indian, and Black were made to be more similar after matching. An absolute standardized difference of 0.25 is deemed acceptable \citep{rubin2001using, stuart2010matching}, which our covariates satisfied after matching.

\subsection{Instrument Validity} Before we present the results of our analysis using the proposed method, we discuss the plausibility of the lottery as an instrument. The lottery is randomized which ensures that the instrument is unrelated to unmeasured confounders and satisfying (A3). Winning the lottery, on average, increased enrollment of Medicaid by 30\% \citep{finkelstein2012oregon}, satisfying (A1). Assumption (A4) in the context of the OHIE states that there are no individuals who defy the lottery assignment to take (or not take) Medicaid if they lost (or won) the lottery. This is guaranteed by the design of the lottery, since an individual who lost the lottery cannot have access to Medicaid. However, we remark that \citet{finkelstein2012oregon} measured the treatment as whether or not an individual has ever had Medicaid during the study and a few individuals were already enrolled in Medicaid before the lottery winners were announced. Finally, assumption (A2) is the only assumption that could potentially be violated since individuals were not blind to their lottery results. This theoretically allowed lottery losers to seek other health insurance or lottery winners to make less healthy decisions since they're now able to be insured. These changes in an individual's behavior could affect his/her outcome regardless of his/her treatment and thus, may violate (A2).

\subsection{Analysis and Results} We run Algorithm \ref{genprocedure} and present the results in Figure \ref{fig:ohieCart}. We remark that we used \emph{rpart} in R with a complexity parameter of 0 and maximum depth of 4. The depth of the tree was chosen by forming trees of larger depth and then pruning back until a more interpretable tree was obtained. 
For each node of the CART, we tested whether or not there is an effect of enrolling in Medicaid $H_{0s}: \lambda_s = 0$. In Figure \ref{fig:ohieCart}, a solid lined box denotes a null hypothesis that was rejected and a dashed lined box denotes a null hypothesis that was retained, both by the closed testing procedure. Each node contains its estimated H-CACE $\hat{\lambda}_s$, 95\% confidence interval, the number of pairs $I_s$, and the estimated compliance rate $\hat{\pi}_s$. Here, a positive H-CACE implies a decrease in the number of days where the individual's physical and mental health prevented them from their usual activities, and a negative value implies an increase; in short, positive effects are beneficial to individuals. Also, some nodes imply a significant effect of Medicaid at level 0.05, but are enclosed in a dashed lined box. This is due to the closed testing procedure; an intersection of hypotheses containing the node in question was not rejected, and so any hypotheses in this intersection could not be rejected.

From Figure \ref{fig:ohieCart}, we can see evidence of heterogeneous treatment effects among the complier population. Specifically, Medicaid had a strong effect (1) among complying non-Asian men over the age of 36 and who prefer English, as well as (2) complying individuals younger than 36, who prefer English, and does not have more than a high school diploma or GED. Interestingly, among non-Asians over the age of 36 and who prefer English, females did not benefit from Medicaid as much as males even though the female subgroup was larger than the male subgroup and the compliance rates between the two subgroups were similar. 

More generally, while there is some variation in the compliance rates between groups, most of them are minor and hover between 25\% to 30\%. The minor variation suggests that while some subgroups are more likely to be compliers than others, most of the effect heterogeneity is likely driven by the variation in how the treatment differentially changes the response across subgroups; a bit more formally, most of the effect heterogeneity is likely arising from the numerator of the H-CACE rather than the denominator of the H-CACE. 

\begin{figure}[hbtp]
    \centerline{\includegraphics[width=14cm, keepaspectratio]{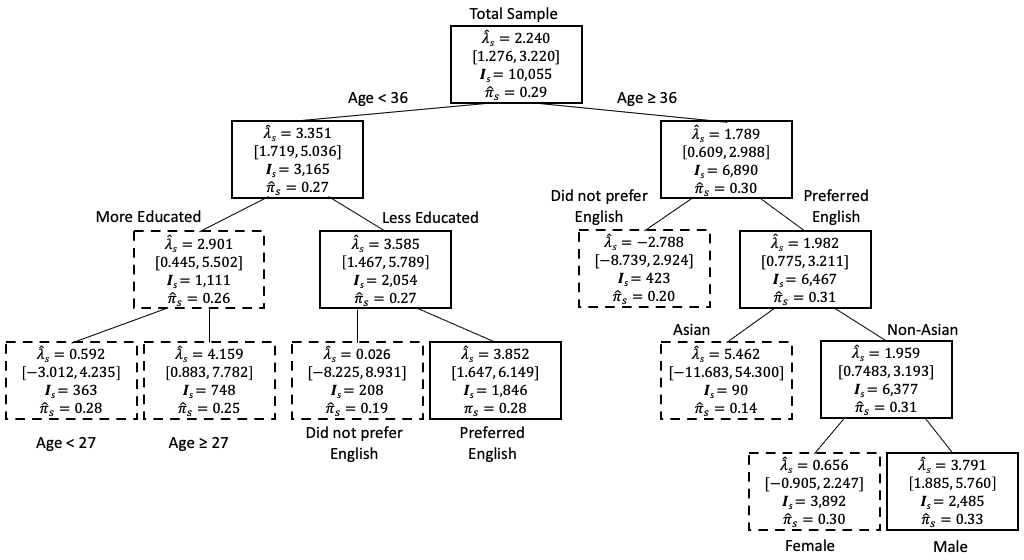}}
    \caption{Results of our proposed method on the effect of enrolling in Medicaid on the number of days physical or mental health did not prevent usual activities. Here, less educated refers to pairs with at most a high school diploma or GED and more educated refers to pairs with a higher education. Also, positive effects are beneficial to individuals. Solid lined boxes denote hypothesis tests that were rejected and dashed lined boxes denote hypotheses that were retained by closed testing. Within each box, the subgroup-specific estimated H-CACE $\hat{\lambda}_s$, its 95\% confidence interval, sample size of pairs $I_s$, and the estimated compliance rate $\hat{\pi}_s$ are provided. }
    \label{fig:ohieCart}
\end{figure}

\subsection{An Example of Closed Testing} \label{sec:app_closedtest} To better illustrate the closed testing portion of Algorithm \ref{genprocedure}, we walk through an example of the testing procedure based on the OHIE. As seen in Figure \ref{fig:ohieCart}, CART produced a tree with $G=8$ leaves. Now, consider testing whether there is evidence of a heterogeneous effect of Medicaid for young individuals who prefer English and have at most a high school diploma or GED, i.e. node $s_4$ in Figure \ref{fig:ctex} and $\mathcal{L} = \{4\}$ using Algorithm \ref{genprocedure}'s notation. The null hypothesis of interest would be $H_{0s_4}$, for all $j=1,2$ and $i \in s_4$. We then test and reject all of the hypothesis tests containing group $s_4$. For example,  we need to test the null hypothesis concerning the ancestor of $s_4$, say the subgroup of individuals who are younger than 36 and have at most a high school diploma or GED denoted as $\mathcal{L}'=\{3, 4\}$; see part (a) of Figure  \ref{fig:ctex}. Additionally, we need to test and reject all of the supersets containing $\mathcal{L}'$, which include but are not limited to the overall set $\{1,\dots, 8\}$, $\{1, \dots, 4\}$, and $\{3, 4, 6\}$. If every superset hypothesis and $H_{0s_4}$ are rejected at level $\alpha$, we can declare the effect in node $s_4$ to be significant and, by Proposition \ref{algorithmprop}, the familywise error rate is controlled at $\alpha$. Repeating this process for every node in the tree will give the results in Figure \ref{fig:ohieCart}.

\begin{figure}
\begin{subfigure}{.49\textwidth}
  \centering
  % include first image
  \includegraphics[width=.9\linewidth]{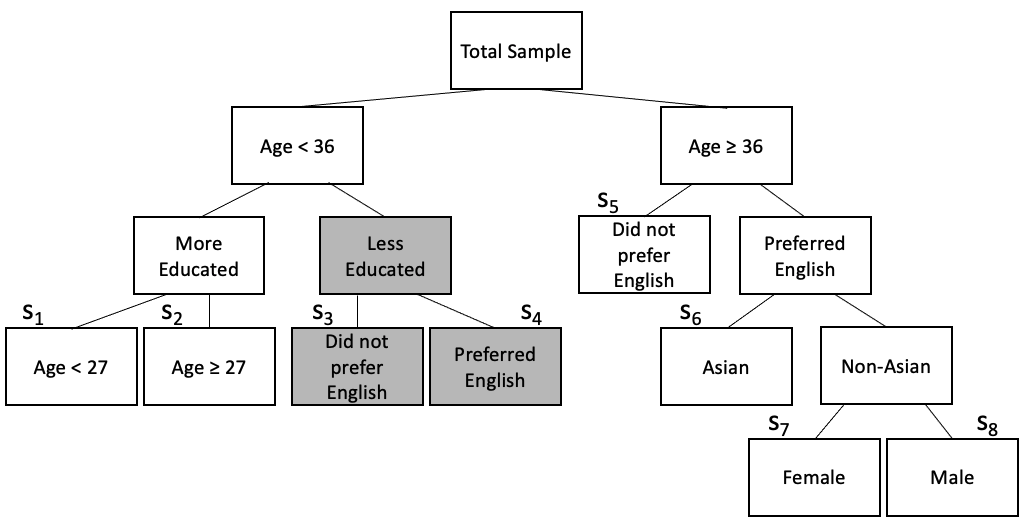}  
  \caption{Subgroup $\mathcal{L}'=\{3,4\}$}
  \label{fig:ct1}
\end{subfigure}
\begin{subfigure}{.49\textwidth}
  \centering
  % include second image
  \includegraphics[width=.9\linewidth]{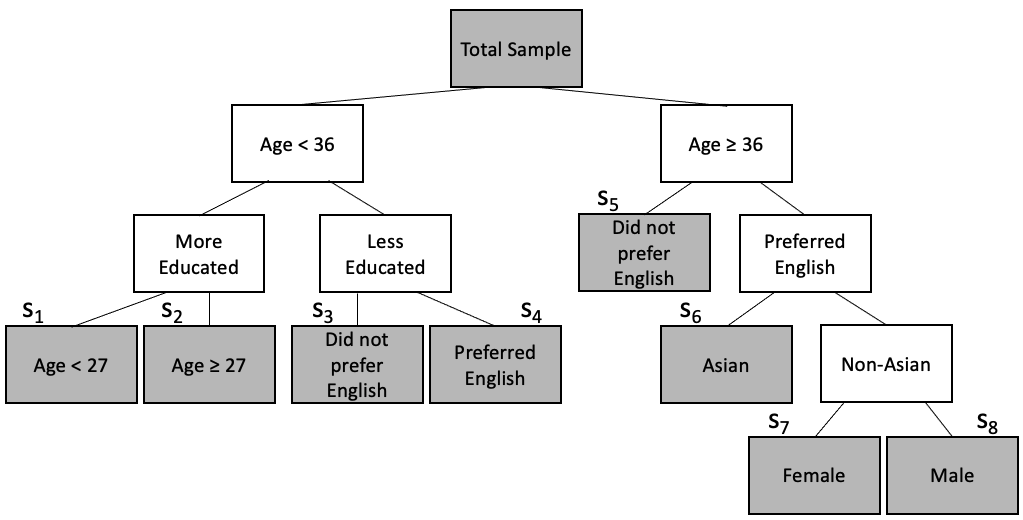}  
  \caption{Overall set $\{1, \dots, 8\}$}
  \label{fig:ct2}
\end{subfigure}

%\newline

\begin{subfigure}{.49\textwidth}
  \centering
  % include third image
  \includegraphics[width=.9\linewidth]{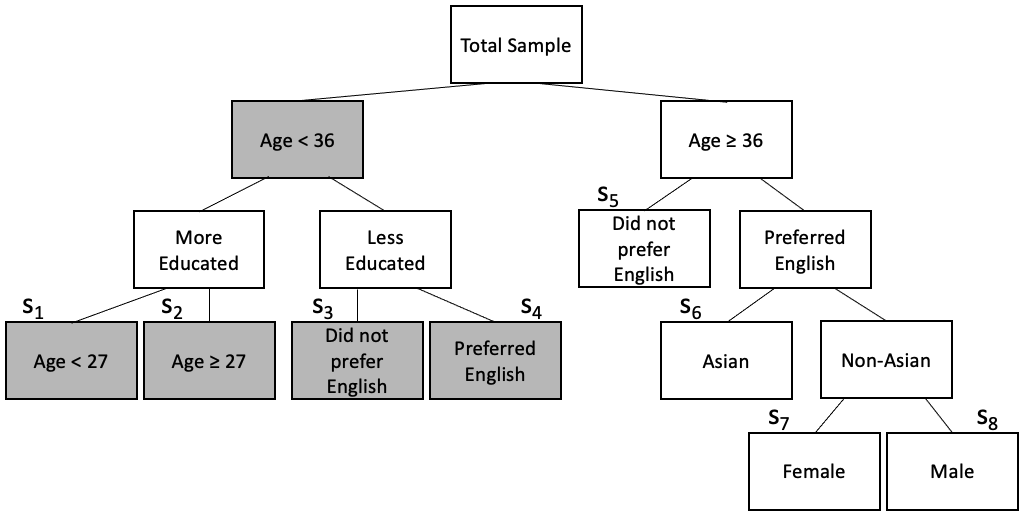}  
  \caption{Set $\{1, \dots, 4\}$}
  \label{fig:ct3}
\end{subfigure}
\begin{subfigure}{.49\textwidth}
  \centering
  % include fourth image
  \includegraphics[width=.9\linewidth]{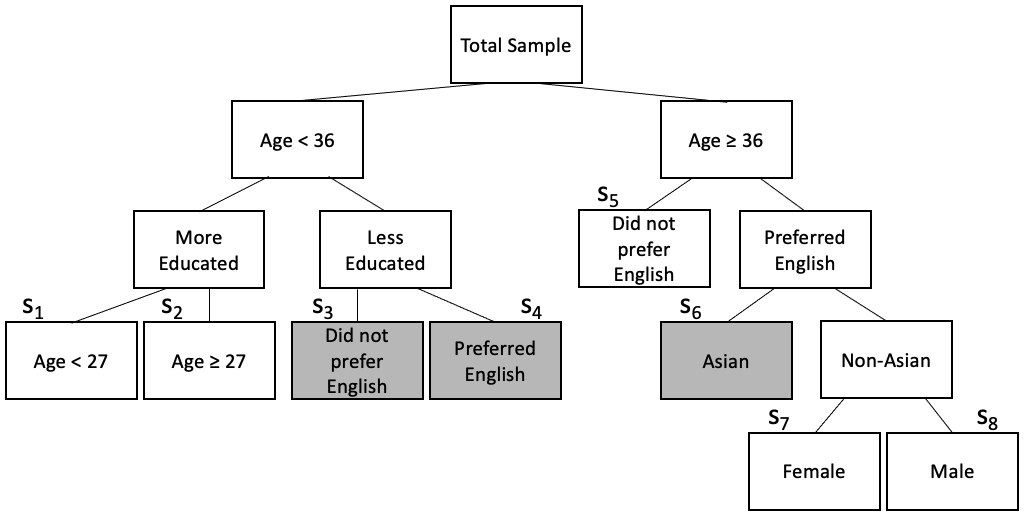}  
  \caption{Set $\{3, 4, 6\}$}
  \label{fig:ct4}
\end{subfigure}
\caption{Illustration of closed testing to test the null hypothesis $H_{0s_4}$ for all $j=1,2$ and $i \in s_4$. Each subplot highlights subsets required to be tested and rejected as part of closed testing.}
\label{fig:ctex}
\end{figure}

\section{Discussion}

In this paper, we propose a method based on matching to detect effect heterogeneity using an instrument. Under the usual IV assumptions, our method discovers and tests heterogeneity in the complier average treatment effect by combining matching, CART, and closed testing, all without the need to do sample splitting. The latter is achieved by taking the absolute value of the adjusted pairwise differences to conceal the instrument assignment and this allows our proposed method to control the familywise error rate. We also conducted a simulation study to examine the performance of our method and compared it to a recent method referred to as BCF-IV. Our method was then used to study the effect of Medicaid on the number of days an individual's physical or mental health did not prevent their usual activities where we used the lottery selection as an instrument. We found that Medicaid benefitted complying, older, non-Asian men who selected English materials at lottery sign-up and for complying, younger, less educated individuals who selected English materials at lottery sign-up. 

We conclude by making some recommendations about how to properly use our algorithm in practice, especially in light of existing approaches. First, as explained in the introduction, when there is noncompliance, exploring heterogeneity in the ITT alone with existing methods may provide an incomplete picture of the nature of the treatment effect. Relatedly, in settings where unmeasured confounding is unavoidable, our method based on an instrument is a promising way to discover and test effect heterogeneity. 

Second, as alluded to in Section \ref{sec:sum_sim}, the simulation results suggest that our algorithm tends to be conservative in discovering novel effect modifiers, reporting mostly true effect modifiers only if there is strong evidence for heterogeneity and minimizing selection of false effect modifiers. In other words, investigators can be reasonably confident that effect heterogeneity exists among the variables declared by our algorithm as effect modifiers. But, those variables that are not selected by our algorithm may also be true effect modifiers, but with slight effect heterogeneity. In such cases, investigators may need  additional samples to detect them using our method. In contrast, BCF-IV tends to be anti-conservative, reporting more effect modifiers, some of which may not actually be true effect modifiers. While this may be advantageous in situations where there is slight effect heterogeneity, investigators may not feel as confident about whether the detected effect heterogeneity truly exists.

Third, most recent approaches on effect heterogeneity, notably \cite{chernozhukov2018generic}, utilize sample splitting to achieve honest inference (i.e. type I error rate control) whereas our method uses absolute value of matched pairs to achieve it; note that both methods theoretically allow for a large class of machine learning methods to detect heterogeneous treatment effects, even though ours focused on CART for its simplicity and interpretability. While our method uses the full sample for both discovery and honest testing compared to those based on sample splitting, one of the caveats of our method is that our method may not be able to detect subgroups with identical effect sizes, but in opposite signs. 
Overall, every algorithm for effect heterogeneity carries some trade-offs and we urge investigators to understand their strengths and limitations to solidify and strengthen causal conclusions about effect heterogeneity in IV studies.

\newpage
\section*{Web Appendix A}
\subsection*{Honest Simultaneous Discovery and Inference}
\label{honestinference}
One advantage of our method is being able to use the entire data for discovering and testing effect modifiers. In order to simultaneously discover and draw inference on the sample, we use the absolute value of the pairwise differences $|Y|$ as the outcome of CART to obscure the sign of the difference in adjusted outcomes and preserve the original distribution of the instrumental variables (i.e. distribution based on assumption (A3)). We can then use this distribution to draw inference on our discovered potential effect modifiers. We study this phenomenon in two cases: (1) testing a single effect modifier (i.e. one hypothesis) and (2) testing multiple effect modifiers (i.e. multiple hypothesis).

In the first case, we are concerned about testing a single hypothesis and controlling the Type I error rate after discovering the hypothesis via CART. To investigate the effect of simultaneously discovering and drawing inference, we generate the potential outcomes with no treatment effect, $\lambda_{x_1x_2}=0$ for all $x_1, x_2$, and form potential effect modifiers for two different cases, (i) using $|Y|$ and (ii) $Y$ as the outcome for CART. The first leaf of each tree (or tree's root if no leaves are formed) is then used to test the null hypothesis of no treatment effect. As a result of conducting CART to form a hypothesis 2000 times, Figure \ref{fig:pvalHistAbsYvsY} shows the histogram of $p$-values when we use $|Y|$ as the outcome for CART versus $Y$ as the outcome for CART. Under $|Y|$, the $p$-values resemble a uniform distribution and hence, Type I error is controlled. However, under $Y$, the $p$-values are right-skewed implying that Type I error is inflated. In other words, the null hypothesis is rejected more frequently when using $Y$ as the outcome of CART, demonstrating the ``winner's curse" phenomena. Therefore, as predicted by Proposition \ref{algorithmprop}, using $|Y|$ as the outcome in CART prevents the contamination of the $\alpha$ level of the hypothesis test and allow for simultaneous discovery and inference.

\begin{figure}[htbp]
    \centering
    \includegraphics[width=15cm, keepaspectratio]{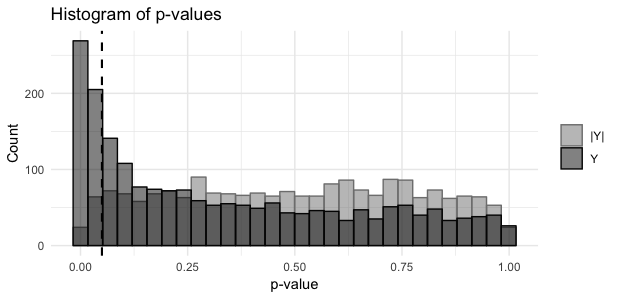}
    \caption{Histogram of $p$-values obtained from using $|Y|$ and $Y$ as the outcomes in CART in discovery of potential effect modifiers. The black dashed line denotes the alpha level of 0.05 of the hypothesis tests.}
    \label{fig:pvalHistAbsYvsY}
\end{figure}

In the second case, with multiple hypotheses, we are concerned with the average Type I error rate and strong control of the familywise error rate rate when testing multiple hypotheses. To evaluate strong control of the familywise error rate, we generate the data where there is an effect in some groups and no effect in others, $\lambda_{00}=2$, $\lambda_{01}=\lambda_{10}=0 = \lambda_{11} = 0$. Furthermore, for \emph{rpart}, we reduce the complexity parameter to 0.0001 to encourage more liberal splitting and set the max depth of the regression tree to be 4 to save on computational time. After data generation we randomly assign instrument values within pairs and split the data using $|Y|$ and $Y$ 2000 times. Each hypothesis is tested for whether or not there is a treatment effect $H_0: \lambda = 0$, so true hypotheses are hypotheses of groups of pairs generated with $\lambda_{01}=\lambda_{10}=\lambda_{11}=0$. The average Type I error rate is computed by taking the average of the proportion of false rejections from each of the 2000 simulated trees and the familywise error rate is computed by taking the average of any false rejections amongst the 2000 simulated trees.

\begin{table}
\centering
\begin{tabular}{|c|c|c|}
\hline
Simulation Setting & Mean Type I error rate & Familywise error rate \\
 \hline
 $|Y|$ & 0.0008 & 0.028\\
 \hline
 $Y$ & 0.0003 & 0.028\\
\hline
\end{tabular}
\caption{Results of simulations analyzing strong control of familywise error rate.}
\label{table:strongfwer}
\end{table}

The results of this simulation show that the average type I error rate and familywise error rate are below the $\alpha$ level of the hypothesis tests in both simulation settings $|Y|$ and $Y$. The average Type I error rate for $|Y|$ and $Y$ is 0.0008 and 0.0003, respectively. The familywise error rate for both $|Y|$ and $Y$ is 0.028 (See Table \ref{table:strongfwer}). This is surprising considering that closed testing requires that each hypothesis test be level $\alpha$ to strongly control the familywise error rate and using $Y$ as the outcome contaminates the test's level. Despite the theoretical underpinnings for this data generation process, closed testing seems to strongly control the familywise error rate regardless of whether or not the test's size is preserved by a technique such as taking the absolute value. Upon closer investigation, it seems that the trees formed in both $Y$ and $|Y|$ cases are the same at the upper levels of the tree, as there is a particularly strong signal for a certain group $\lambda_{00} = 2$. This then leads to the same hypotheses in both settings resulting in similar Type I error rates and familywise error rates. Overall, the simulation suggests that in the case where there is one very strong signal, the difference between using $|Y|$ and $Y$ is minor. But, we do stress familywise error control is only guaranteed for the $|Y|$ case.

\section*{Web Appendix B}
\setcounter{proposition}{0}
\begin{proposition}
[Familywise Error Rate Control of Algorithm \ref{genprocedure}] \label{algorithmpropproof}
Under the sharp null hypotheses $H_{0\mathcal{L}}$ in Algorithm \ref{genprocedure}, the conditional probability given $(\mathcal{F}, \mathcal{Z}, \mathcal{G})$ that the algorithm makes at least one false rejection of the set of hypotheses is at most $\alpha$.
\end{proposition}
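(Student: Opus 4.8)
The plan is to reduce the statement to two ingredients and then invoke the closed testing principle of \cite{marcus1976closed}. The two ingredients are: (I) conditioning on the CART grouping $\mathcal{G}$ does not disturb the within-pair fair-coin distribution of $\mathbf{Z}$ on the matched pairs at which the sharp null actually holds; and (II) for any such collection of ``null pairs'' the acceptance region $|T_s(\lambda_0)/S_s(\lambda_0)|\le z_{1-\alpha/2}$ used inside Algorithm \ref{genprocedure} is an asymptotically valid level-$\alpha$ test of the corresponding intersection hypothesis. Granting (I) and (II), the argument closes in the usual way: Algorithm \ref{genprocedure} is an implementation of closed testing over the family $\{H_{0s_1},\dots,H_{0s_G}\}$, so a base hypothesis $H_{0s_g}$ is finally rejected only if every intersection $H_{0\mathcal{L}}$ with $g\in\mathcal{L}$ is rejected by its local test. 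Writing $\mathcal{T}$ for the (data-determined) index set of true base hypotheses, a false rejection entails rejecting some $H_{0s_g}$ with $g\in\mathcal{T}$, hence --- taking $\mathcal{L}=\mathcal{T}\supseteq\{g\}$ --- rejecting the true intersection hypothesis $H_{0\mathcal{T}}$ by its local test; by (I)--(II) this event has conditional probability at most $\alpha$ given $(\mathcal{F},\mathcal{Z},\mathcal{G})$, which bounds the familywise error rate. (The case $\mathcal{T}=\emptyset$ is vacuous.)

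For (I), I would rewrite the adjusted response $V_{ij}=R_{ij}-\lambda_0 D_{ij}$ in potential-outcome form and note that, on any pair $i$ at which the sharp null $r_{1ij}^{(d_{1ij})}-r_{0ij}^{(d_{0ij})}=\lambda_0(d_{1ij}-d_{0ij})$ holds, $V_{i1}$ and $V_{i2}$ no longer depend on $Z_i$, so $|Y_i|=|V_{i1}-V_{i2}|$ is a function of $\mathcal{F}$ alone; on the remaining pairs $|Y_i|$ depends on $\mathbf{Z}$ only through $Z_i$. Let $\mathcal{N}\subseteq\{1,\dots,I\}$ be the (unknown but $\mathcal{F}$-measurable) set of pairs at which the sharp null holds, so that the true-null index set is $\mathcal{T}=\{g:s_g\subseteq\mathcal{N}\}$ and $s_{\mathcal{T}}=\bigcup_{g\in\mathcal{T}}s_g\subseteq\mathcal{N}$. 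Since CART receives only $(|Y_1|,\dots,|Y_I|,\mathbf{X})$ as input, $\mathcal{G}$ is a deterministic function of $\mathcal{F}$ together with $(Z_i)_{i\in\mathcal{N}^c}$; and by assumption (A3) the $Z_i$ are independent fair within-pair flips given $(\mathcal{F},\mathcal{Z})$, so $(Z_i)_{i\in\mathcal{N}}$ is independent of $\mathcal{G}$ given $(\mathcal{F},\mathcal{Z})$. Conditioning further on $\mathcal{G}$ therefore leaves $(Z_i)_{i\in\mathcal{N}}$ --- and hence its subcollection $(Z_i)_{i\in s_{\mathcal{T}}}$ --- a family of independent fair coin flips. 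The point requiring care, which I expect to be the main (if not deep) obstacle, is the apparent circularity that $s_{\mathcal{T}}$ is itself read off from $\mathcal{G}$; this is dispelled by routing the argument through the $\mathcal{F}$-measurable set $\mathcal{N}$, for which $s_{\mathcal{T}}\subseteq\mathcal{N}$ holds whatever value $\mathcal{G}$ takes, so the independence claim is established for $(Z_i)_{i\in\mathcal{N}}$ and then merely restricted.

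For (II), on the conditioning event above the per-pair summand of $T_{s_{\mathcal{T}}}(\lambda_0)$ for $i\in s_{\mathcal{T}}\subseteq\mathcal{N}$ is a symmetric random sign times the fixed number $|Y_i|$, independently across $i\in s_{\mathcal{T}}$, so $T_{s_{\mathcal{T}}}(\lambda_0)$ has conditional mean zero; applying the central-limit argument of \citet{baiocchi2010building} and \citet{kang2016full} to the subsample $s_{\mathcal{T}}$ then gives that $T_{s_{\mathcal{T}}}(\lambda_0)/S_{s_{\mathcal{T}}}(\lambda_0)$ is asymptotically standard normal, so the conditional probability that $|T_{s_{\mathcal{T}}}/S_{s_{\mathcal{T}}}|>z_{1-\alpha/2}$ tends to $\alpha$; the same computation applies verbatim to every intersection $H_{0\mathcal{L}}$ with $\mathcal{L}\subseteq\mathcal{T}$, which is exactly the input the closed testing principle needs. (If one instead used the exact permutation law of $T_s$ given $(\mathcal{F},\mathcal{Z},\mathcal{G})$ as the reference distribution in place of the normal approximation, each local test would be exact and the conclusion non-asymptotic.) Assembling (I), (II), and the closed testing reduction of the first paragraph completes the proof; the CLT and the closed testing step are quotations of known results, and essentially all the care is concentrated in ingredient (I).
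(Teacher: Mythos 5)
Your proof is correct, and its skeleton --- reduce a false rejection of any true base hypothesis to a rejection of the intersection $H_{0\mathcal{T}}$ of all true hypotheses, then bound the probability of that single event by the level of the local test --- is exactly the closed-testing argument the paper gives in Web Appendix B. Where you go beyond the paper's own proof is your ingredient (I): the paper simply asserts $P\bigl(\bigl\lvert T(\lambda_0)/S(\lambda_0)\bigr\rvert \ge z_{1-\alpha/2}\mid\mathcal{F},\mathcal{Z},\mathcal{G}\bigr)=\alpha/2$ for the intersection test without verifying, inside the proof, that conditioning on the CART output $\mathcal{G}$ leaves the within-pair randomization distribution of $\mathbf{Z}$ intact; that fact is argued only informally in Section 2.4, and there only under the \emph{global} sharp null. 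Your device of routing the argument through the $\mathcal{F}$-measurable set $\mathcal{N}$ of pairs where the sharp null actually holds --- so that $\mathcal{G}$ is a function of $\mathcal{F}$ and $(Z_i)_{i\in\mathcal{N}^c}$ alone, and $(Z_i)_{i\in\mathcal{N}}$ remains a family of independent fair coins after conditioning on $\mathcal{G}$ --- is precisely what is needed for \emph{strong} (partial-null) control rather than weak control, and it also cleanly resolves the circularity that the true-null index set $\mathcal{T}$ is read off from $\mathcal{G}$. Two small remarks: the paper's displayed probability should be $\alpha$, not $\alpha/2$, for the two-sided rejection region; and you are right that with the normal cutoff the level guarantee is only asymptotic (the paper states it as exact), with your parenthetical about substituting the exact permutation law of $T_s$ given $(\mathcal{F},\mathcal{Z},\mathcal{G})$ being the honest non-asymptotic fix.
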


\begin{proof}
 Define $h \subseteq \{1, \dots, I\}$ to be the union of all groups of pairs for which the null hypothesis is true; the groups of pairs of individuals which have an effect ratio of $\lambda_0$. In order to have a notion of type I error, some hypothesis or hypotheses must be true, so we assume that $h \neq \emptyset$ and that hypothesis $H_{0\mathcal{K}}: \lambda_s = \lambda_0$ for $s = \bigcup_{g \in \mathcal{K}} s_g$ is true. Note that by definition of $h$, in order for $H_{0\mathcal{K}}$ to be true, the groups in $\mathcal{K}$ are also contained in $h$, $s \subseteq h$. To make a type I error and reject hypothesis $H_{0\mathcal{K}}$, Algorithm \ref{detailedprocedure} must reject the intersection of all true hypotheses $H_{0\mathcal{T}}$, where $h=\bigcup_{g \in \mathcal{T}}$ and $\mathcal{K} \subseteq \mathcal{T}$. Yet, rejecting $H_{0\mathcal{T}}$ requires $\bigl \lvert \frac{T(\lambda_0)}{S(\lambda_0)} \bigr \rvert \geq z_{1-\alpha/2}$, where $P \left( \bigl \lvert \frac{T(\lambda_0)}{S(\lambda_0)} \bigr \rvert \geq z_{1-\alpha/2} \mid \mathcal{F}, \mathcal{Z}, \mathcal{G} \right) = \frac{\alpha}{2}$. Therefore, to make a type I error and reject $H_{0\mathcal{K}}$, one must reject $H_{0\mathcal{T}}$ which is a level $\alpha$ test.
\end{proof}

\setcounter{algocf}{0}

\IncMargin{1em}
\begin{algorithm}[ht]
\DontPrintSemicolon
\SetKwInOut{Input}{Given}\SetKwInOut{Output}{Output}
\SetKwBlock{NewBlock}{}{}
\Input{Observed outcome $R$, binary instrument $Z$, exposure $D$, covariates $X$, null value $\lambda_0$ for testing, and desired familywise error rate $\alpha$}
\BlankLine
\nl Pair match on observed covariates.\;
\nl Calculate absolute value of pairwise differences for each matched pair
\[\bigl\lvert Y_i \bigr\rvert = \bigl\lvert(Z_{i1} - Z_{i2})(R_{i1} - \lambda_0 D_{i1} - (R_{i2} - \lambda_0 D_{i2})) \bigr\rvert\]\;
\vspace{-4mm}
\nl Construct mutually exclusive and exhaustive grouping 
using CART. Here, CART takes $|Y_i|$ as the outcome and $\mathbf{X}_i$ from each matched pair as the predictors and outputs a partition of covariates, which we use to define $\mathcal{G}= \{s_1, \dots, s_G\}$ and consequently, H-CACEs. \;
\nl Run closed testing \citep{marcus1976closed} to test statistical significance of H-CACEs for every subset  $\mathcal{L} \subseteq \{1, \dots, G\}$ of $G$ groups where each subset defines the null hypothesis of the form $H_{0 \mathcal{L}}: r_{1ij}^{(d_{1ij})} - r_{0ij}^{(d_{0ij})} = \lambda_0 (d_{1ij} - d_{0ij})$ for all $g \in \mathcal{L}$.  \;
\Indp \vspace{-3mm}

\Indm\SetAlgoNoLine\NewBlock{
    \SetAlgoLined\For{$ \mathcal{L} \subseteq \{1, \dots, G \} $}{ 
        \If{$H_{0\mathcal{L}}$ \emph{has not been accepted}}{
        Calculate $T_{s}(\lambda_0)$ and $S_{s}(\lambda_0)$ for $s=\bigcup_{g \in \mathcal{L}} s_g$\;
        \If{$\bigl\lvert\frac{T_{s}(\lambda_0)}{S_{s}(\lambda_0)}\bigr\rvert \leq z_{1-\alpha/2}$}{
        Accept the null hypothesis $H_{0\mathcal{K}}: \lambda_{\mathcal{K}} = \lambda_0$ for all $\mathcal{K} \subseteq \mathcal{L} \subseteq \{1, \dots, \mathcal{G}\}$}
        \Else{Reject $H_{0 \mathcal{L}}$}
        }
    }
}
\Output{Estimated and inferential quantities for H-CACEs (e.g. effect size, confidence interval, $p$-value) and novel H-CACEs from closed testing.}
\caption{Proposed method to discover and test effect heterogeneity in IV with matching}\label{detailedprocedure}
\end{algorithm}\DecMargin{1em}

\section*{Web Appendix C}
\subsection*{Detecting H-CACE under Varying Compliance}

In section 3, the simulation settings assumed constant compliance rates across the groups. But it is possible that the compliance rates vary between subgroups. Therefore, we further consider four varying compliance rate settings as an extension of understanding the method's performance. These four different compliance settings are referred to as (a) Same, (b) Similar, (c) Different (1), and (d) Different (2) and are functions of the overall compliance rate $\pi$. Each are categorized based on the distance from the overall compliance rate. If the overall compliance is less than a half, $\pi \leq 0.5$, a group's compliance rate is $\pi_{x_1x_2} = \pi + c_{x_1x_2}\pi$, and if $\pi > 0.5$, a group's compliance rate is $\pi_{x_1x_2} = \pi + c_{x_1x_2}(1-\pi)$ for some constant $c_{x_1x_2} \in [0,1]$. The four settings are then defined by the constants $c_{x_1x_2}$; (a) Same compliance: $c_{00}=c_{01}=c_{10}=c_{11}=0$; (b) Similar compliance: $c_{00}=c_{01}= -0.1$ and $c_{10}=c_{11}=+0.1$; (c) Different (1) compliance: $c_{00}=c_{01}= -0.5$ and $c_{10}=c_{11}=+0.5$; and Different (2) compliance: $c_{00}=-0.3$, $c_{01}= -0.5$, $c_{10}=+0.1$, and $c_{11}=+0.7$. When combined with the treatment heterogeneity settings, we have a total of 16 possible settings of heterogeneity in H-CACE. We also remark that subgroups experiencing a larger H-CACE have a small compliance rate. Again, we compare our method to the BCF-IV method \citep{bargagli2019heterogeneous}.

Figure \ref{fig:HypAll} shows the true discovery rate of the four compliance types for each treatment heterogeneity setting. As in Section 3, this is a measure of the statistical power, where the true discovery rate is defined as the number of false hypotheses rejected out of all false hypotheses tested by closed testing. The different facets denote the treatment effect heterogeneity and the compliance heterogeneity. As the compliance rates get more different, we observe a reduction in the true discovery rate. This is most noticeable in the strong heterogeneity setting, where we see pronounced differences in the true discovery rate among the different compliance settings. In this setting, we also see a change in the true discovery rate between the more different (Different (1) and (2)) and more similar (Same and Similar) compliance groups; the more different compliance groups have an increased true discovery rate after an overall compliance rate of $\pi \approx 0.45$. This is due to the low compliance rate in the groups with stronger H-CACE in Different (1) and Different (2) settings, obscuring the signal for which CART uses to define subgroups. Therefore, the dip described in Section 3 and explained in Web Appendix E that is observed in the Strong Heterogeneous and Same Compliance settings (Figure \ref{fig:HypAll}) occurs at a high overall compliance rate for the Different (1) and Different (2) settings. In the heterogeneity settings, as the overall compliance rate grows, so too does the subgroup-specific compliance rates, and so the true discovery rates of the compliance settings converge. This is evidence that our method's true discovery rate relies on both the subgroup-specific size of H-CACEs and subgroup-specific compliance rates.

\begin{figure}[htbp]
    \centerline{\includegraphics[width=17cm, keepaspectratio]{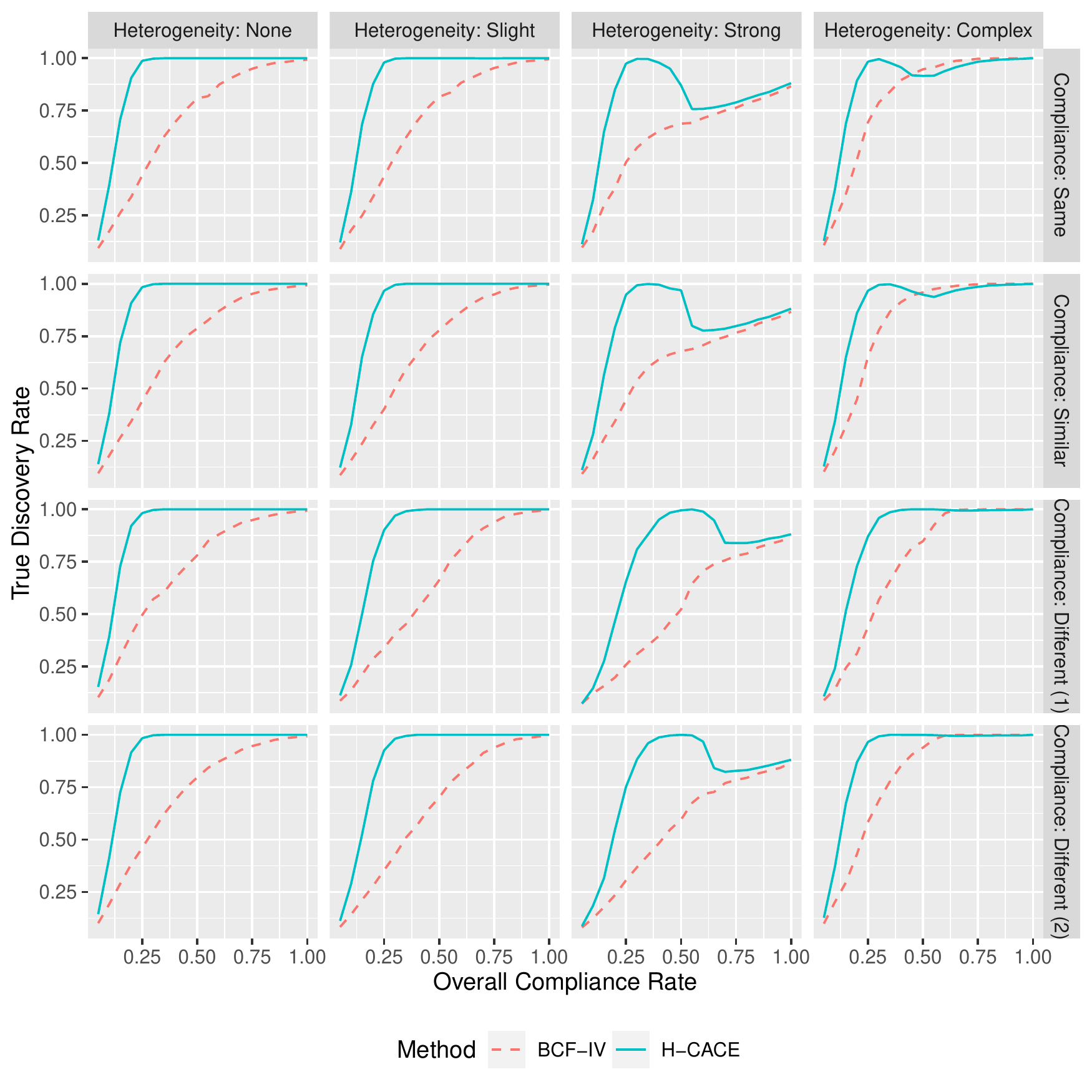}}
    \caption{True discovery rate as a function of overall compliance rate for the four treatment and four compliance heterogeneity settings. The color and line type denote the method, where the red dashed line denotes BCF-IV and the blue solid line denotes our method.}
    \label{fig:HypAll}
\end{figure}

Figure \ref{fig:emAll} shows the F-score and false positive rate (FPR) of the four compliance types for each treatment heterogeneity setting. As in Section 3, these are binary classification measures evaluating the algorithms' ability to detect true and false effect modification. For true positives (TP), false positives (FP), and false negatives (FN), the F-score is defined to be $F=\frac{TP}{TP + 0.5(FP + FN)}$, and the FPR is defined as the number of false positives out of the negative conditions. Similarly to Figure \ref{fig:HypAll}, as the compliance rates get more different, we observe a reduction in the performance of the algorithms. This is most noticeable in the Slight and Strong Heterogeneity settings for the BCF-IV algorithm and in the Strong and Complex Heterogeneity settings for our proposed algorithm. For the BCF-IV algorithm, the FPR begins to decline at a larger overall compliance rate in the more heterogeneous compliance settings, and the F-score of our algorithm climbs at a later overall compliance rate as well. As was the case with the true discovery rate, this is evidence that the FPR and F-score rely on both the H-CACE and compliance rate heterogeneity.

\begin{figure}[htbp]
    \centerline{\includegraphics[width=17cm, keepaspectratio]{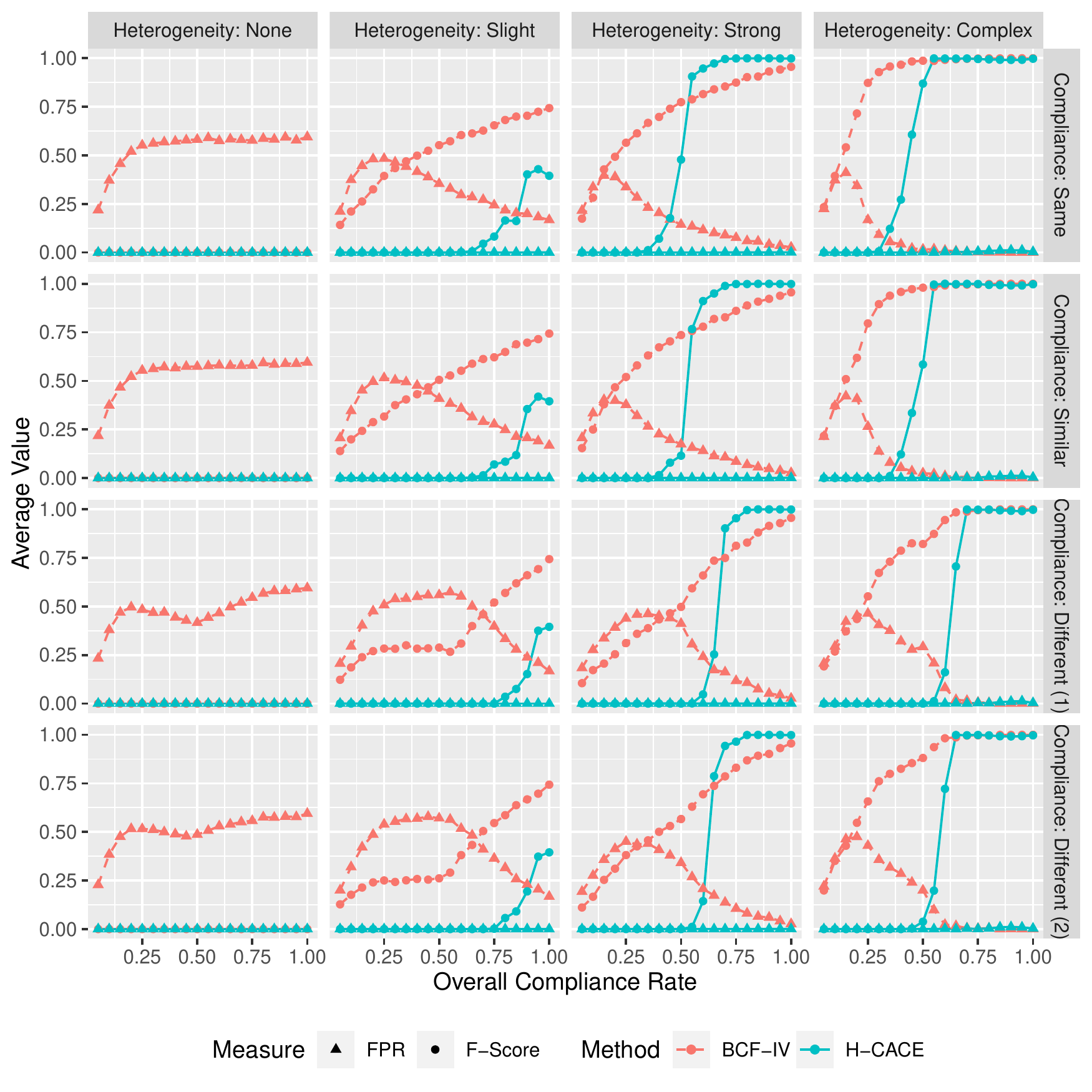}}
    \caption{F-score and false positive rates (FPR) as a function of overall compliance rate for the four treatment and four compliance heterogeneity settings. The shape of the points denote the measure, where a circle denotes the F-score and the triangle denotes the FPR. The color and line type denote the method, where the red dashed line denotes BCF-IV and the blue solid line denotes our method.}
    \label{fig:emAll}
\end{figure}

These simulations are evidence that our method's ability, as measured in statistical power and selection of effect modifiers relies on both the subgroup-specific size of H-CACEs and the subgroup-specific compliance rates.

\section*{Web Appendix D}
\subsection*{Testing for Equal, But Opposite Effects}

As it is possible that H-CACEs are equal in magnitude but opposite in direction, it is a question of how our method performs when we take the absolute value of the pairwise differences. To evaluate our method in this specific situation, we generate data as described in Section 3 but now considering only the heterogeneity setting $\lambda_{00}=0.3$, $\lambda_{01}=-0.3, \lambda_{10}= 0.7$ and $\lambda_{11} = -0.7$. Now, there are two effect modifiers $x_1$ and $x_2$ where $x_1$ changes the magnitude of the H-CACE and $x_2$ changes the direction. We compare our method to the BCF-IV method \citep{bargagli2019heterogeneous} as it does not transform the outcome and should still be able to detect the effect modification in this setting. 

\begin{figure}[htbp]
    \centerline{\includegraphics[width=17cm, keepaspectratio]{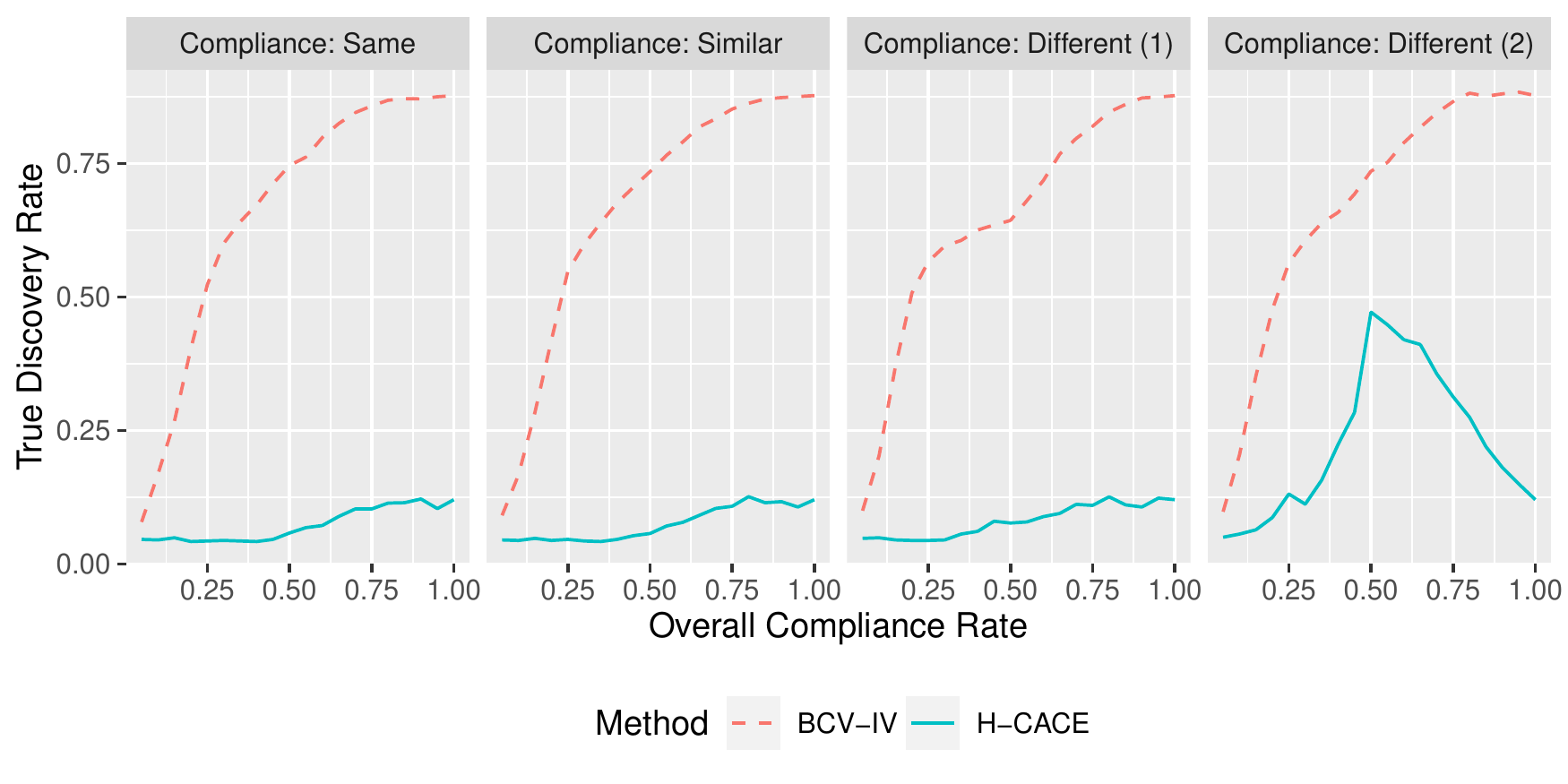}}
    \caption{True discovery rate as a function of overall compliance rate for four compliance heterogeneity settings. The linetype denotes the two methods, where a dashed line denotes the BCF-IV method and the solid line denotes our method.}
    \label{fig:Hypbcfhcace}
\end{figure}

\begin{figure}[htbp]
    \centerline{\includegraphics[width=17cm, keepaspectratio]{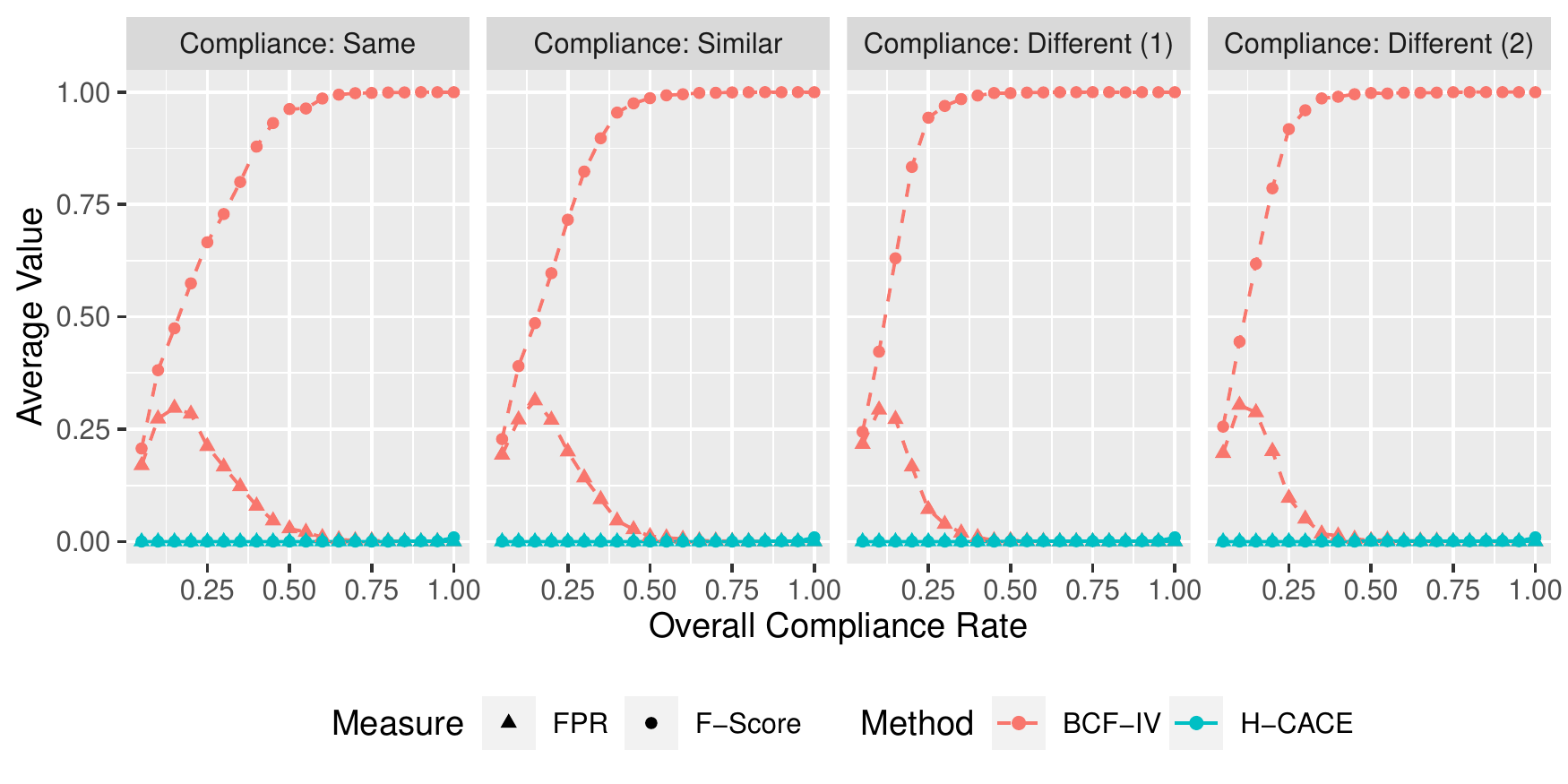}}
    \caption{False positive rate (FPR) and F-score as a function of overall compliance rate for the four compliance heterogeneity settings. The  color denotes the two methods, where a red line denotes the BCF-IV method and the blue line denotes our method. The point shapes denote the measure, where a triangle denotes FPR and a circle denotes F-score.}
    \label{fig:embcfhcace}
\end{figure}

In Figures \ref{fig:Hypbcfhcace} and \ref{fig:embcfhcace}, we see our method underperform in detecting heterogeneous treatment effects in this setting of equal but opposite effects, both in true discovery rate and F-score. The reason is two fold. First, in the transformation of the outcome of CART, the absolute value of the pairwise differences obscures the signal and hinders CART's ability to properly split. Second, in the case that CART does split, our closed testing procedure makes it challenging to reject the hypotheses suggested, since we must first reject the global hypothesis of $H_0:\lambda_0 = 0$. As we point out in Section 2.3, the CACE $\lambda$ is a weighted average of the H-CACEs in the global test, and it is therefore unlikely to reject as the average of the H-CACEs is 0. Together, CART's inability to split and the global hypothesis needing to be first rejected, our proposed method underperforms in the setting with equal but opposite effects. We therefore express caution in using our algorithm in the occasion an investigator believes the effect sizes are equal but opposite.

Interestingly, we see a spike of increased true discovery rate for both methods in the Different (2) compliance setting. In this setting, the compliance rates for the four groups are $\pi_{00}=0.7\pi$, $\pi_{01}=0.5\pi$, $\pi_{10}=1.1\pi$, and $\pi_{11}=1.7\pi$ when the overall compliance rate $\pi$ is less than or equal to 0.5, and $\pi_{00}=-0.3 + 1.3\pi$, $\pi_{01}=-0.5 + 1.5\pi$, $\pi_{10}=0.1 + 0.9\pi$, and $\pi_{11}=0.7 + 0.3\pi$ when $\pi>0.5$. Therefore, when $\pi$ is close to 0.5, the compliance rates are approach $\pi_{00}=0.35$, $\pi_{01}=0.25$, $\pi_{10}=0.55$, and $\pi_{11}=0.85$ for the four groups, changing the weights of the H-CACEs in the average for the global effect and shifting its value away from 0. As the overall compliance approaches 0 or 1, the heterogeneity in the compliance of the four groups reduces so the average of the H-CACEs approaches 0. This all improves our statistical test's ability to reject the global hypothesis and increase the true discovery rate. We do not see a large improvement however, as CART still fails to split. We note that in the case that the magnitudes are unequal, then our method will return to the performance demonstrated in Section 3, as CART will have signal to split on and our global hypothesis will more easily be correctly rejected. This can be seen in an example that the heterogeneity has the form $\lambda_{00}=\lambda_{01} = 0.9$, and the dotted line represents pairs from $\lambda_{10}=\lambda_{11} = -0.1$. With the absolute value transformation of the pairwise differences, our algorithm would treat this setting as in the Strong Heterogeneity setting we simulate in Section 3.

\section*{Web Appendix E}
\subsection*{Counter-intuitive Dip in True Discovery Rate}

As mentioned in Section 3 and shown in Figure \ref{fig:strongcomplexMarg}, we observe a counter-intuitive dip in true discovery rate as compliance rate grows for our proposed method. To investigate this drop, we also plot in different line types the true discovery rate of single subgroups formed by CART. The dashed lines denote leaves containing pairs with a stronger treatment effect and the dotted lines denote leaves containing pairs with a weaker treatment effect. For example, in the strong heterogeneity setting, the dashed line represents pairs from $\lambda_{00}=\lambda_{01} = 0.9$, and the dotted line represents pairs from $\lambda_{10}=\lambda_{11} = 0.1$. For the complex heterogeneity setting, the dashed line denotes pairs generated by $\lambda_{00}=1.5$; dotted lines aren't shown because CART fails to form a group consisting of only pairs generated by $\lambda_{11}=0.5$. By comparing the curves, we see that as the compliance rate grows the drop in the true discovery rate is due to the formation of leaves with smaller treatment effects.  Because the compliance rate is large enough, these small effects are beginning to be detected by CART. But, the power to detect these effects are much smaller than the large effects, and so the overall true discovery rate, which is roughly the average of these two curves, dips briefly. However, As the compliance rate grows, we see the true discovery rate of our method begin to climb again, as more signal for the smaller H-CACE groups is gained.

\begin{figure}[htbp]
    \centerline{\includegraphics[height=7cm,width=13cm, keepaspectratio]{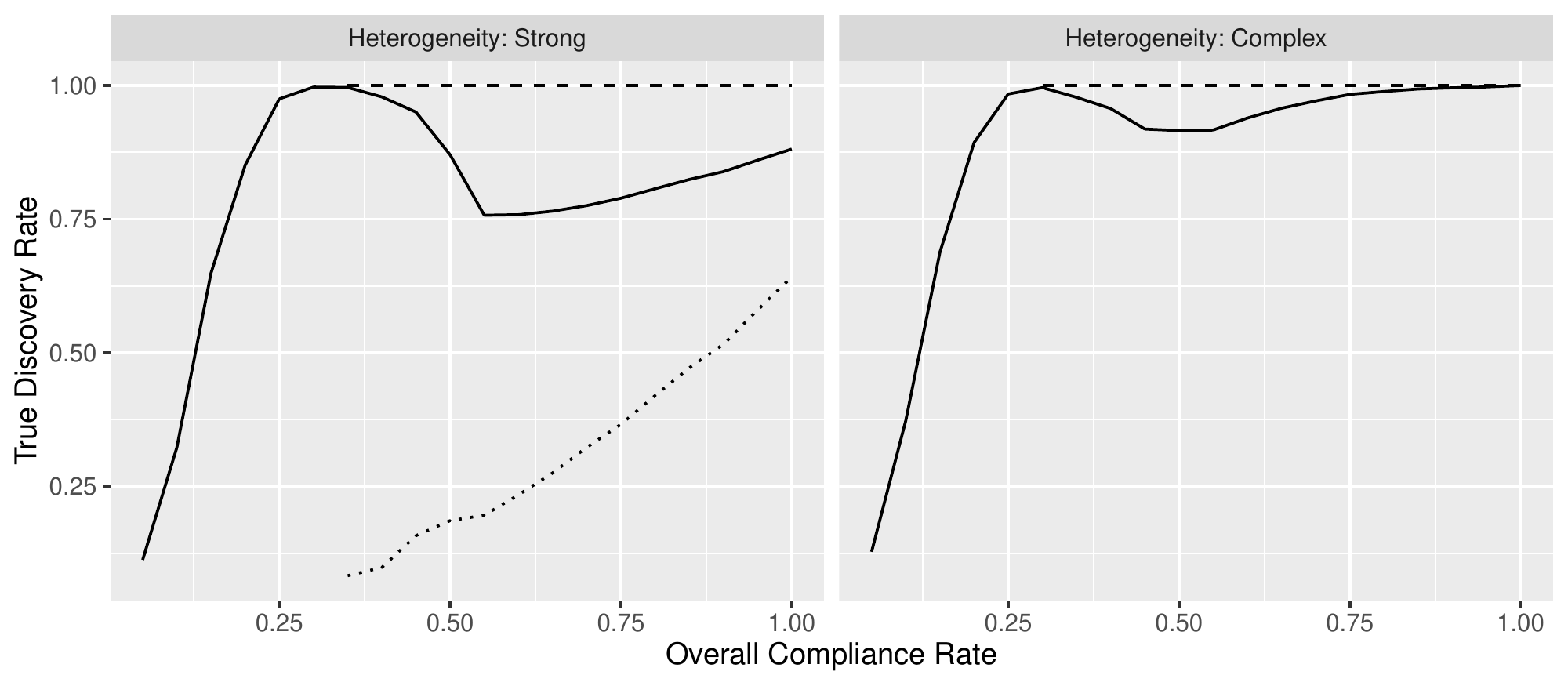}}
    \caption{True discovery rate as a function of overall compliance rate for the strong heterogeneity and complex heterogeneity settings. The line type denotes a single subgroup's treatment effect, where a dashed line denotes the stronger treatment effect and a dotted line the weaker treatment effect. }
    \label{fig:strongcomplexMarg}
\end{figure}

\section*{Web Appendix F}
\subsection*{Oregon Health Insurance Experiment Simulation}

Using the Oregon Health Insurance Experiment (OHIE) as a template for another simulation, we evaluate the performance of the proposed method under treatment magnitudes using the true discovery rate as a measure of the method's statistical power, and false positive rate (FPR) and F-score to measure the method's performance in determining effect modifiers. Using the matched pairs from the OHIE and their pre-instrument covariates $\mathbf{X_{ij}}$, we generate the potential treatments $d_{0ij}$ and $d_{1ij}$ and potential outcomes $r_{0ij}^{(d_{0ij})}$ and $r_{1ij}^{(d_{1ij})}$. Since the design of the OHIE ensures one-sided compliance, we generate the potential treatment without receiving the outcome to also satisfy one-sided compliance, $d_{0ij}=0$. To have similar compliance rates as observed in Section 4, the potential treatment having received the instrument is a Bernoulli trial with success rate $\pi(\mathbf{X_{ij}}) = 0.32 - 0.15(1-English) + 0.15(English \times Asian) - 0.05(Age < 36)$. Here, $English$ is a binary indicator where a value of $1$ denotes the individual's preference for English materials in signing up for the lottery, and $Asian$ is a binary indicator where a value of $1$ denotes the race reported in the survey as Asian. With this heterogeneous compliance rate, the overall compliance is the same as that estimated for the sample in Section 4, $\pi =0.29$. We then generate the potential outcomes as in Section 3, the potential outcomes having not received the instrument are from a standard normal distribution $r_{0ij}^{(d_{0ij})} \sim N(0,1)$, and the potential outcomes having received the instrument are a function of the H-CACE and effect modifiers detected in Section 4, $r_{1ij}^{(d_{1ij})} = r_{0ij}^{(d_{0ij})} + d_{1ij} \lambda(\mathbf{X_{ij}})$. The H-CACE $\lambda(\mathbf{X_{ij}})$ is a function of the effect modifiers $Age$, $Education$, $English$, $Asian$, and $Sex$ and the magnitude of the effects are defined in three settings:

\begin{itemize}
    \item[(i)] Small: $\lambda(\mathbf{X_{ij}}) = 0.25 + 4Age^{-1} + 0.1(1-Education) - 0.25(1-English) + 0.35Asian + 0.2(1-Sex)(Age \geq 36)$
    \item[(ii)] Moderate: $\lambda(\mathbf{X_{ij}}) = 0.5 + 8Age^{-1} + 0.2(1-Education) - 0.5(1-English) + 0.7Asian + 0.4(1-Sex)(Age \geq 36)$
    \item[(iii)] Large: $\lambda(\mathbf{X_{ij}}) = 1 + 16Age^{-1} + 0.4(1-Education) - 1(1-English) + 1.4Asian + 0.8(1-Sex)(Age \geq 36)$
\end{itemize}

\noindent As in Section 4, $Education$ is a binary variable where a value of 1 denotes a vocational degree, 2-year degree, 4-year college degree, or more.

For our proposed method, we use the R package \emph{rpart} with a complexity parameter of 0.001, max depth of 7, and minimum number of observations needed for a split to be 90. This allows CART to split on more variables than the number of effect modifiers while preventing the creation of nodes with very few observations. For BCF-IV, we use the default \emph{rpart} settings as in Section 3. The averages of the 1000 simulations at each treatment magnitude level are provided in Table \ref{table:ohiesimtab}. 

The results of this simulation are similar to those seen in Section 3, where our proposed method performs well in the true discovery rate, but performs poorly in the FPR and F-score when the compliance rate is low and the heterogeneity magnitudes are weak. This further aligns with our results in Section 3, as we found our method struggles selecting effect modifiers as measured by the F-score at compliance rates below 50\% and the compliance rate is 29\% in this setting. However, we see that our method improves in the F-score as the treatment magnitude increases. In contrast, BCF-IV outperforms our method in the F-score, but has an inflated FPR and a deflated true discovery rate. However, as the signal improves, BCF-IV's FPR reduces to a more ideal value and the true discovery rate grows. 

\begin{table}[h!]
    \centering
    \setlength{\tabcolsep}{0.5em} % for the horizontal padding
    {\renewcommand{\arraystretch}{1.3}% for the vertical padding
    \begin{tabular}{ |c||c|c c c| } 
    \hline
    Method & $\lambda(\mathbf{X_{ij}})$ & True Discovery Rate & FPR & F-Score \\
    \hline
    \multirow{3}{4em}{H-CACE} & Small & 1.00 & 0.00 & 0.00 \\ 
    & Moderate & 0.99 & 0.02 & 0.04 \\ 
    & Large & 0.99 & 0.06 & 0.76 \\ 
    \hline
    \multirow{3}{4em}{BCF-IV} & Small & 0.72 & 0.17 & 0.56 \\ 
    & Moderate & 0.83 & 0.05 & 0.74 \\ 
    & Large & 0.93 & 0.01 & 0.88 \\ 
    \hline
    \end{tabular}
    }
    \caption{Average true discovery rate, false positive rate (FPR), and F-score of the two methods at the different treatment magnitudes.}
    \label{table:ohiesimtab}
\end{table}

\newpage
\bibliography{bibliography.bib}
\end{document}